\begin{document}
\newtheorem{definition}{Definition}
\newtheorem{problem}{Problem}
\newtheorem{theorem}{Theorem}
\newtheorem{lemma}{Lemma}
\newtheorem{corollary}{Corollary}

\title{Similarity Join and Similarity Self-Join Size Estimation in a Streaming Environment} 


\author{Davood~Rafiei
 and Fan Deng
 \IEEEcompsocitemizethanks{
   \IEEEcompsocthanksitem D. Rafiei is with the Department of Computing Science, University of Alberta, Edmonton, AB, Canada T6G 2E1. E-mail: drafiei@ualberta.ca
    \IEEEcompsocthanksitem F. Deng was with the Department of Computing Science, University of Alberta, Edmonton, AB, Canada T6G 2E1. E-mail: dengfan@hotmail.com}
    }



\IEEEtitleabstractindextext{
\begin{abstract}
We study the problem of similarity self-join and similarity join size estimation in a streaming setting where the goal is to estimate, in one scan of the input and with sublinear space in the input size, the number of record pairs that have a similarity within a given threshold. The problem has many applications in data cleaning and query plan generation, where the cost of a similarity join may be estimated before actually doing the join. 
On unary input where two records either match or don't match, the problem becomes join and self-join size estimation for which one-pass algorithms are readily available.
Our work addresses the problem for $d$-ary input, for $d \geq 1$, where the degree of similarity can vary from 1 to $d$. We show that our proposed algorithm gives an accurate estimate and scales well with the input size. We provide error bounds and time and space costs, and conduct an extensive experimental evaluation of our algorithm, comparing its estimation accuracy to a few competitors, including some multi-pass algorithms. Our results show that given the same space, the proposed algorithm has an order of magnitude less error for a large range of similarity thresholds.
\end{abstract}
 \begin{IEEEkeywords}
selectivity estimation, similarity join, size estimation, one pass algorithm, streaming data.
\end{IEEEkeywords}
 }




\maketitle
\newcommand*{\Archive}.  

\IEEEdisplaynontitleabstractindextext
\IEEEpeerreviewmaketitle
\section{Introduction}
The problem of {\it near-duplicates} or {\it similar record pairs} is associated with having multiple 
representations or records of the same entity or concept in a database. 
Detecting near-duplicates has been studied in the past
under different names 
such as data deduplication, merge-purge, record linkage, etc. \cite{ElmagarmidIV07tkde}.

Analyzing the similarity self-join size provides important insight when the semantics of rows and columns is less-known, and this is a commonly expected case in open data~\cite{RMiller2018}.
Consider, for example, an open data of bibliographic records with untagged attributes {\em title, author, journal, volume} and {\em year}.
The similarity self-join size
with a match expected in 4 out of 5 columns (i.e. 80\% similarity) gives the degree of uniformity under any  projection of 4 attributes. It can be noted that
the field {\em title} is not expected to have many duplicates, whereas the {\em author} field may have a limited number of duplicates since two authors can have the same name or an author can have more than one record.  More duplicates are expected in columns {\em journal, volume} and {\em year}. 
For the same reason, the similarity self-join size under projections of 4 attributes is not expected to be much different than that of 5, but the similarity self-join size for projections of size 3 is expected to be much larger; the same is observed for real DBLP records in our experiments (see Table~\ref{table:dataStat}). Such information about columns and their relationships can be obtained by analyzing the sizes of similarity self-joins.
In other words, the similarity self-join size describes not only the frequency distribution of the rows but also the soft dependencies and functional relationships between the columns~\cite{chaudhuri2009systems}.  
 
Many other applications can be listed for similarity join size estimation. 
The self-join size gives the degree of uniformity or skew, and the similarity self-join size gives the degree of skew under some {\em projections}.
The degree of skew is an important statistics in parallel and distributed database applications and may determine the choice of data partitioning strategies (e.g. vertical or horizontal) and algorithms being employed~\cite{DNSS92,SGKNM08}. For example, the Hadoop-based algorithm that won the Terasort benchmark in 2008 included a partitioning function that heavily relied on the
key distributions present in the 2008 benchmark, which may not be present
in other datasets~\cite{KRBH13}. A more general solution is expected to detect the skew,
which often arise in the `reduce' phase, and to balance the load accordingly.
In projected clustering~\cite{AggarwalHWY05}, one needs to find the set of dimensions such that the spread (or dissimilarity) is the least, and a similarity join size can be an important statistics in detecting those dimensions.
In general, 
estimating the skew
can also help with data storage and indexing~\cite{zfs-dedup}, data cleaning~\cite{shawnJeffery08} and maybe homogeneity analysis~\cite{giniInd89}. 
For example, before running a data deduplication
that can take days or even weeks, one may want to quickly find 
out if there are enough near-duplicates and that running an expensive cleaning
operation is justified.  

The setting we assume in this paper is that the synopsis of a table must be constructed in one pass.
This is desirable for rapidly growing tables where a multi-pass method can be costly.
Also the input data to a similarity join sometimes consists of data streams, which may not be fully available when a synopsis is constructed. For example, a similarity join placed in a join tree may take its input from other operators, and while the input to the join is streaming in, estimates of the join size may be sought.
It has been noted that cardinality estimation errors in a query cost model can easily be in multiple orders of magnitude and join queries are usually the largest contributors to those misestimates~\cite{leis2015good}.
Morales and Gionis~\cite{MoGi16} cite trend detection and near-duplicate filtering in a microblogging site as some applications of similarity self-join in a streaming setting. 

The problem addressed in this paper can be stated as follows:
given a collection of records and a threshold, estimate the
number of record pairs that have a Hamming distance equal to or less than the
threshold. The Hamming distance function is well-defined on bit strings and binary vectors and naturally extends to
more general strings, vectors, multi-field records, etc.
For example, the Hamming distance between two records gives the minimum 
number of substitutions that would transform one record to the other.
This quantity can be divided by the
number of fields to get the fraction of fields or features in
which two records differ. One minus that fraction will give what may be referred to as the {\it Hamming similarity}. Also other distance functions may be mapped to
the Hamming distance and our method can be used under those mappings.
 Many applications of Hamming distance are reported in the literature, for example to detect duplicate Web pages~\cite{Manku07}, duplicate records in academic 
digital libraries~\cite{Williams2013}, duplicate images~\cite{KSH2004}, etc.

A naive algorithm to estimate the number of near-duplicates is to compare every 
pair of records, which requires storing the entire dataset and has a quadratic time complexity.
However, an exact estimate often is not needed and an algorithm
that more efficiently obtains an estimate may be preferred~\cite{Broder97,CGGM03,LNS11}. 
Also the memory used for computing an estimate is usually limited, and storing the data structures in external memory has additional overhead which grows with the dataset size and should be avoided.  
To the best of our knowledge, our 
method is the first that computes an estimate of similarity self-join size within only one scan over 
such data and with a limited storage.
Also our experimental
results show that the error of our estimates is often less than or comparable
to some of the latest multi-pass algorithms. 



Our contributions can be summarized as follows:
\begin{itemize}
\item We study and address the problem of similarity self-join size estimation in a streaming setting where the input, given one by one, cannot be fully stored. While the problem has been studied for input that consists of 1-dimensional records such as numbers~\cite{AMS99JCSS,JoinSize99PODS,SketchNet05VLDB}, we are not aware of any previous work that addresses the problem on input that has more than one dimension. This paper studies the general problem and presents an efficient and elegant probabilistic algorithm, extending previous techniques 
to streaming input with more than one dimension. Extending our algorithm to similarity join size estimation is straightforward (as discussed in Section~\ref{sec:ext-join}).

\item We analyze the time and space costs of our algorithm, showing that the join size can be accurately estimated in logarithmic space as long as the input dimensionality is kept low (10 or less as shown in our analysis and experiments). More precisely, we prove that our algorithm gives an unbiased
accurate estimate with high probability using a set of counters and that the number of
those counters is independent of $n$, the number of records in the dataset,
and only depends on $d$, the dimensionality; the space needed for each counter is bounded to $O(\log(n))$ bits. 

\item We evaluate the performance of our method in terms of the accuracy of the estimates and the running time. Compared to random sampling which is the only competitor in a streaming setting, our algorithm gives significantly more accurate results and scales better for large input sizes. 
We also compare the performance of an ``offline'' version of our algorithm in which the intermediate results are materialized (and not sketched)  to two recently proposed non-streaming algorithms
\cite{LNS09,LNS11}. 
Our experiments on real data show that the proposed algorithm (under a comparable space requirement) is much more accurate than these alternative algorithms; it has to be pointed out that both these methods, unlike ours, require more than one pass over  data.
\end{itemize}




\subsection{Definitions and the problem statement}
\label{sec:defs-and-stmt}
Given a bag of records where $n$ denotes the number of distinct records and $m_i$ denotes the multiplicity of record $i$, the self-join size of the bag (also referred to as the second frequency moment) is defined as
\begin{equation}
f_2=\sum_{i=1}^n m_i^2.
\label{eq:selfjoin}
\end{equation}
Given a pair of records with the same schema, we call the pair {\it s-similar} if the number of attributes where the pair have the same values is $s$.
For examples, records $r_1$ and $r_3$ in Table~\ref{table:4rowex} are 2-similar, and so are records $r_2$ and $r_4$. 
Compared to the Hamming distance which is defined on binary vectors (i.e. vectors with 0/1 values),
 $s$-similarity is defined on general records (e.g. of employees or students).

Extending Eq.~\ref{eq:selfjoin}, consider the self-join of a collection of $d$-dimensional records, and let $x_k$ denote the number of different record pairs that are $k$-similar. As in self-join, $k$-similar pairs ($r_1$,$r_2$) contribute twice to $x_k$, but self-pairs ($r_1$,$r_1$) are not counted. Hence, the number of self-pairs is added to $s$-similarity self-join size, $g_s$, defined as 
\begin{equation}
g_s = \sum_{k=s}^d x_k + n,
\label{eq:gsDef}
\end{equation}
for $n$ records, each of dimensionality $d$. $g_s$ gives the number of record pairs that are at least $s$-similar. 
In a streaming setting, similarity join may be computed at any point while the stream is being received (aka continuous queries). 



{\bf Problem statement}.
Given $n$ records each with $d$ attributes and a similarity threshold $s$, we want to estimate $g_s$,
the $s$-similarity self-join size in one scan of the input with a limited memory smaller than $n$. We also want to extend our estimation to similarity join sizes. 



{\bf Organization}. The outline of the rest of the paper is as follows:
Three baseline algorithms are briefly presented in Section~\ref{sec:baseline-algs},
and our similarity self-join size estimation algorithm is discussed in
Section~\ref{sec:our-sjpc}. 
We present an analysis of our algorithm in terms of the estimation error and time and space bound in Section~\ref{sec:analysis}. Running time is analyzed in Section~\ref{sec:runtime}, and an extension of our algorithm for similarity join size estimation is studied in Section~\ref{sec:ext-join}.
Experimental results are reported in Section~\ref{sub:exp}, and related work is reviewed in
Section~\ref{sec:RelatedWork}. Section~\ref{sec:conclude} concludes the paper.

\section{Baseline Algorithms}
\label{sec:baseline-algs}



One widely used baseline is random sampling; it is 
applied in contexts similar to ours \cite{JoinSize99PODS,LNS11} 
and is very easy to implement. We also present the signature pattern counting~\cite{LNS09} and
the LSH-based bucketing 
~\cite{LNS11}, as two more baselines.

\subsection{Random sampling}
\label{sec:randomSampling}
For a set of $n$ records, 
one can pick $R$ different records uniformly 
at random (sampling without replacement); then
use the straightforward pair-wise comparison between the selected records 
to find $x_k$, $k=s,\ldots,d$ for the sample. Next, the  estimates can be scaled by the ratio of the size 
of the population~\footnote{Our population here is the set of record pairs being joined.} 
to the size of the sample space, i.e. $\frac{n(n-1)}{R(R-1)}$. Finally $g_s$ is estimated as in Eq.~\ref{eq:gsDef}.
For example, suppose random sampling selects rows $r1$ and $r3$ from Table~\ref{table:4rowex} with a sampling ratio of $0.5$. The selected rows are 2-similar, and the estimates of $x_3$, $x_2$ and $x_1$ for the sample are respectively 0, 2 and 0, and for the population are 0, $\frac{2*4*3}{2*1}=12$
and 0. 

\begin{table}[htb]
\caption{An example for $s$-similarity estimation}
\label{table:4rowex}
\begin{minipage}{.16\textwidth}
\begin{tabular}{l|lll}
{\bf $R$}           & A & B & C\\\hline
$r_1$ & $a_1$ & $b_1$ & $c_1$\\
$r_2$ & $a_2$ & $b_2$ & $c_2$\\
$r_3$ & $a_1$ & $b_1$ & $c_3$\\
$r_4$ & $a_3$ & $b_2$ & $c_2$
\end{tabular}
\end{minipage}%
\begin{minipage}{.85\textwidth}
\begin{tabular}{l}
3-similar pairs: \{($r_1$,$r_1$),($r_2$,$r_2$),($r_3$,$r_3$),($r_4$,$r_4$)\}\\
2-similar pairs: \{($r_1$,$r_3$),($r_3$,$r_1$),($r_2$,$r_4$),($r_4$,$r_2$)\}\\
1-similar pairs: \{\}\\
\end{tabular}
\end{minipage}
\end{table}

Alon et al.~\cite{JoinSize99PODS} used a similar random-sampling technique
in their experiments for estimating the self-join sizes of data streams.
However, the results show that it is not as accurate as other methods. Random sampling is also used in the literature for similarity join size estimation~\cite{LNS11}.

\begin{lemma} \label{lem:sampling_lower_bound}
Random-sampling requires a sample of size $\Omega(\sqrt{n})$ to give an 
estimate of the similarity self-join size with a relative error less
than $100\%$ with high probability. 
\end{lemma}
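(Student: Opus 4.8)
The plan is to exhibit a single hard instance on which the sampling estimator fails with probability $1-o(1)$ whenever the sample size is $o(\sqrt{n})$, so that no $o(\sqrt{n})$-size sample can meet the accuracy guarantee with high probability. First I would construct the instance: fix a ``heavy'' group $G$ of $t=\Theta(\sqrt{n})$ records that are mutually $s$-similar (for $s=d$, take them all identical; for $s<d$, make them agree on a common set of $s$ coordinates), and fill the remaining $n-t$ records with ``singletons'' chosen so that every pair outside $G$ (singleton--singleton and singleton--group) agrees on fewer than $s$ coordinates. This is easy to arrange for any $d\ge 1$ by giving distinct records distinct values on enough coordinates. With this layout the only non-self pairs that are $\ge s$-similar are the $t(t-1)$ ordered pairs inside $G$, so the true similarity self-join contribution beyond the diagonal is $\sum_{k\ge s}x_k=t(t-1)=\Theta(n)$.

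Next I would argue that the sampled estimate detects a similar pair only if the sample contains at least two members of $G$: singletons contribute nothing, and a lone member of $G$ has no partner in the sample. Hence the ``good'' event is contained in the event that the sample hits $G$ at least twice. Since we sample $R$ records without replacement, the probability that two fixed sampled slots both land in $G$ is $t(t-1)/(n(n-1))$, and a union bound over the $\binom{R}{2}$ pairs gives
\begin{equation}
\Pr[\,|\text{sample}\cap G|\ge 2\,]\ \le\ \binom{R}{2}\,\frac{t(t-1)}{n(n-1)}\ =\ O\!\left(\frac{R^{2}t^{2}}{n^{2}}\right)\ =\ O\!\left(\frac{R^{2}}{n}\right),
\end{equation}
using $t=\Theta(\sqrt{n})$. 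Whenever $R=o(\sqrt{n})$ this bound is $o(1)$, so with probability $1-o(1)$ the sample sees no similar pair, the estimator reports the bare diagonal $\hat{g}_s=n$, and it therefore misses the entire $\Theta(n)$ worth of similar pairs. Thus the relative error stays bounded away from zero with probability $1-o(1)$, which contradicts any high-probability accuracy guarantee and yields the claimed $\Omega(\sqrt{n})$ lower bound.

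The step I expect to be the main obstacle is matching the specific $100\%$ figure while keeping the balance that pins the threshold at exactly $\sqrt{n}$: enlarging $t$ drives the missed mass $t(t-1)$ above the known diagonal $n$ and pushes the relative error toward $100\%$, but it simultaneously raises the per-sample hit probability and thereby lowers the critical sample size below $\sqrt{n}$. To handle this I would state the bound in terms of the non-self similar-pair count $\sum_{k\ge s}x_k$, for which failing to sample a single similar pair forces the estimate to $0$ and hence a relative error of exactly $100\%$; the same $\binom{R}{2}t(t-1)/(n(n-1))$ calculation with $t=\Theta(\sqrt{n})$ then shows this failure occurs with probability $1-o(1)$ once $R=o(\sqrt{n})$. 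As an alternative route that reaches the same threshold, I would compute the variance of the unbiased sampling estimator: writing the in-sample group count as a hypergeometric variable with mean $\mu=Rt/n$ and scaling by $n(n-1)/(R(R-1))$, the leading falling-factorial terms cancel and the residual variance is $\Theta(t^{3}n/R+t^{2}n^{2}/R^{2})$, which exceeds the squared mean $\Theta(t^{4})$ unless $R=\Omega(n/t)=\Omega(\sqrt{n})$, giving relative fluctuations of order one below the threshold.
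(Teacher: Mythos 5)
Your proposal is correct and takes essentially the same approach as the paper: the paper's own proof uses exactly your construction as its second instance (a group of $\sqrt{n}$ records identical on $s$ columns amid non-similar singletons), with the same calculation that an $o(\sqrt{n})$ sample contains two group members, hence any similar pair, with probability $o(1)$, so the estimate collapses to the self-pair count $n$. The only cosmetic differences are that the paper frames this as indistinguishability between $D_s$ and a similarity-free dataset $D_{ns}$ (whose true sizes $2n$ vs.\ $n$ differ by a factor of two), while you argue under-estimation on the single hard instance and, more carefully than the paper, patch the $100\%$ accounting by measuring error against the non-self similar-pair count.
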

\begin{proof} 
\ifdefined\Archive
  See Appendix~\ref{sec:proofs}. 
\else
  See the extended version~\cite{RFarxiv2018}.
\fi
\end{proof}
\subsection{Signature pattern counting}
Lee et al.~\cite{LNS09} map the similarity self-join size estimation into the problem of finding a set of frequent signature patterns and estimating the number of records that match each pattern.
Each signature pattern is a record with some constants and some wildcard symbols say *. A data record matches a signature pattern if both have the same constants in respective columns; there is no matching constraint on columns marked with *. Clearly two records that match a signature pattern with {\it s} constants must be $k$-similar for $k\geq s$. Having a set of signature patterns each with at least {\it s} constants, and the number of records matching each pattern, one can estimate the similarity
self-join size for the set of tuples matching each pattern and add up the estimates. For example, this algorithm, applied to Table~\ref{table:4rowex}, can produce the patterns [a1,b1,*] and [*,b2,c2] both with frequencies $2$. 

However, this approach has some problems: (1) the estimate does not take into account
the overlap between patterns (e.g. [*,3,5,*] and [2,*,5,*]) which can lead to a double-counting; 
(2) the search space for patterns with frequency at least 2 is huge, and the number of such patterns
may not be much smaller than the size of the dataset. The authors address the first problem by 
placing the patterns in a lattice structure and estimating the size of the overlap between patterns for each level of the lattice. The second problem is addressed by modelling the pattern distribution as a power law and estimating (but not actually computing) the number of matches for low-frequency patterns based on the counts obtained for high-frequency ones.  

Their algorithm must (1) compute a set of frequent signature patterns and (2) collect the counts of records matching each pattern. 
A typical 
frequent counting is expected to make {\it d} passes over the data, where {\it d} is the record dimensionality, but one may consider either the partitioning scheme of Savasere et al.~\cite{SavasereON95} or the sampling method of Toivonen~\cite{Toivonen96} to cut the number of passes to two. Even Manku and Motwani~\cite{MankuM02} suggest a one-pass algorithm based on sticky sampling. However, these algorithms are generally useful in detecting very frequent patterns and this is reflected in their error bounds which is relative to the input length; they are likely to miss many less-frequent patterns. Also once the signature patterns are found, one more pass over data is needed to collect the actual counts. 

\subsection{LSH-based bucketing}
In this approach, Lee et al.~\cite{LNS11} map data records into some buckets using a locality sensitive hashing scheme. Two strata are defined for the sake of self-join sampling:
(1) record pairs that are mapped to the same bucket, (2) record pairs that
are mapped to two different buckets. Record pairs are sampled from each
stratum and their similarities are assessed. Finally, the results are 
scaled (based on record counts which are kept in each bucket) to derive
an estimate for the similarity join size.
%
To construct LSH buckets on disk, the algorithm has to read and write the whole data. 
One more pass is also needed to sample the record pairs.

\section{Our Estimation Algorithm}
\label{sec:our-sjpc}
The basic idea behind our estimation algorithm is to map the problem of similarity self-join size estimation into a set of self-join size estimations for which more efficient solutions are available, before putting together the results to obtain an estimate for the similarity self-join size. To illustrate the concept, consider Table $R$ with 3 columns and 4 rows, as shown in Table~\ref{table:4rowex}.

The self-join size of $R$ is $4$, and so is the number of records, hence $R$ has no 3-similar pairs other than self-pairs (as shown in Table~\ref{table:4rowex}). 
Now consider the projection of $R$ on columns (A,B), (A,C) and (B,C) with duplicates kept. This would yield tables $R_1$, $R_2$ and $R_3$, as shown with their self-join sizes.

\begin{center}
\begin{tabular}{ccc}
\begin{tabular}{lll}
{\bf $R_1$} & A & B\\\hline
&$a_1$ & $b_1$ \\
&$a_2$ & $b_2$ \\
&$a_1$ & $b_1$ \\
&$a_3$ & $b_2$ \\\hline
\multicolumn{3}{c}{s/join size=6}
\end{tabular} &
\begin{tabular}{lll}
{\bf $R_2$}&A & C\\\hline
&$a_1$ & $c_1$\\
&$a_2$ & $c_2$\\
&$a_1$ & $c_3$\\
&$a_3$ & $c_2$\\\hline
\multicolumn{3}{c}{s/join size=4}
\end{tabular} &
\begin{tabular}{lll}
{\bf $R_3$}&B & C\\\hline
&$b_1$ & $c_1$\\
&$b_2$ & $c_2$\\
&$b_1$ & $c_3$\\
&$b_2$ & $c_2$\\\hline
\multicolumn{3}{c}{s/join size=6}
\end{tabular} 
\end{tabular}\\
\end{center}

The self-join size of $R_1$ is 6, and that of $R$ is $4$. Excluding 3-similar pairs, $R$ must have $6-4=2$ pairs of rows that are 2-similar on columns (A,B). Similarly the self-join size of $R_3$ is 6, which indicates that $R$ has two pairs of records that are 2-similar on columns (B,C). No pairs of records in $R$ are 2-similar on columns (A,C). Putting the results together, one can conclude that $R$ has $2+2=4$ pairs of 2-similar records. This is  {\it the exact} number of 2-similar pairs, calculated solely based on the join sizes and the size of $R$. 



There are three problems associated with this approach: 
(1) the number of possible projections of a relation with $d$ attributes is $2^{d-1}$ and so is the number of self-join size estimates;
(2) the join sizes are not independent simply because
the projections of two $s$-similar records are expected to have some $s-1, s-2, \ldots, 1$-similar pairs;
(3) the sum of the number of records in the projected tables can be much larger than the input (or more precisely it can be larger by factor of $2^{d-1}$) and this has implications in the required space usage and per-record  processing time.

We address the first problem by reducing the number of self-join size estimations to $d-s+1$. This is done by collapsing all projections with $k$ attributes into a single stream. To make distinctions between tuples  from different projections though, we attach another column to the stream to indicate the projection from which the tuple is drawn. Otherwise, two tuples that have the same values under different projections can join, leading to wrong join sizes. Applying this to the projections with two attributes in our running example will yield a table with three columns, twelve rows and the self-join size of 16 (as shown in Table~\ref{table:sampling-ex}), of which 12 are self-pairs. This gives 16-12=4 pairs of 2-similar records, consistent with our previous results.


The second problem is addressed in the next section by calculating the contributions of an $s$-similar pair to the projections and incorporating it in our size estimations. We address the third problem through a combination of sampling and sketching, showing that both the space usage and the error can be bounded, and that the proposed algorithm outperforms our competitive baselines by a large margin.   

\subsection{Handling double-counting}
Given a table with $d$ attributes, the set of possible groupings of the attributes can be described in a lattice. 
Suppose the grouping that includes all attributes is at Level $d$ of the lattice; then level $d-1$ will have all subsets consisting of $d-1$ attributes, and so on  (as shown in Fig.~\ref{fig:lattice} with 4 attributes). 

\begin{figure}[ht]
\centering
\includegraphics[width=2.3in]{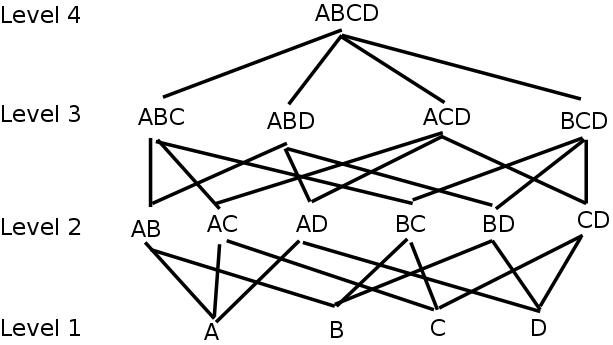} 
\caption{Attribute groupings for ABCD}
\label{fig:lattice}
\end{figure}  

To obtain a relationship between the self-join size and the number of similar pairs, let $y_k$ denote the self-join size at level $k$ of the lattice, and $x_k$ be the number of record pairs that are $k$-similar. 
\begin{lemma}
\label{lemma:XkYk-a}
Given $k \leq d$, for a set of $n$ records with no $k+1, k+2, \ldots, d$-similar pairs,
\[
x_k = y_k -{d \choose k}n.
\]
\end{lemma}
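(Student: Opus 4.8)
The plan is to prove the identity by a double-counting argument over ordered record pairs. First I would observe that $y_k$, the self-join size at level $k$, equals the sum of the self-join sizes of the $\binom{d}{k}$ individual projections onto $k$ attributes. This is exactly what the projection-indicator column in the combined stream guarantees: tuples drawn from different $k$-subsets can never collide, so the combined self-join size decomposes as a plain sum over projections. The self-join size of a single projection $P$ (with duplicates retained) counts the ordered pairs of records $(r_1,r_2)$, self-pairs included, that agree on every attribute of $P$.

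Next I would switch the order of summation and count, for each ordered pair $(r_1,r_2)$, how many level-$k$ projections it contributes to. If $r_1$ and $r_2$ agree on exactly $j$ attributes, then the projections on which they fully agree are precisely the $k$-subsets of those $j$ matching attributes, of which there are $\binom{j}{k}$. Hence
\[
y_k = \sum_{(r_1,r_2)} \binom{j(r_1,r_2)}{k},
\]
where $j(r_1,r_2)$ denotes the number of matching attributes of the pair and the sum ranges over all ordered pairs, self-pairs included.

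Then I would split this sum into self-pairs and non-self pairs. Each of the $n$ self-pairs is $d$-similar and so contributes $\binom{d}{k}$, giving a total of $\binom{d}{k}n$. A non-self pair that is exactly $j$-similar contributes $\binom{j}{k}$, and there are $x_j$ such ordered pairs by definition, so the non-self part equals $\sum_{j=k}^{d}\binom{j}{k}x_j$ (terms with $j<k$ vanish because $\binom{j}{k}=0$). Invoking the hypothesis that there are no $k+1,\dots,d$-similar pairs kills every term with $j>k$, leaving only $\binom{k}{k}x_k=x_k$. Combining the two parts yields $y_k=\binom{d}{k}n+x_k$, which rearranges to the claim.

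The step I expect to be the main obstacle is the bookkeeping around self-pairs and the similarity levels above $k$: one must recognize that the general, unconditional relation is $y_k=\binom{d}{k}n+\sum_{j\ge k}\binom{j}{k}x_j$, and that the stated lemma is just the special case in which the no-higher-similarity hypothesis collapses the tail of this sum to its single surviving term. Getting the self-pair count right, in particular remembering that self-pairs are $d$-similar and therefore sit at $j=d$ rather than at $j=k$, is where an off-by-a-binomial-coefficient error would most easily creep in.
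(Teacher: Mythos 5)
Your proof is correct, and its core counting idea coincides with the paper's: pairs contribute to $y_k$ through the level-$k$ projections on which they agree, and self-pairs account for exactly ${d \choose k}n$ of the total. The difference is one of organization. The paper proves the lemma directly under the hypothesis: an exactly $k$-similar pair agrees on a \emph{unique} $k$-subset of attributes, hence contributes exactly once to $y_k$, and subtracting the self-pair count finishes the argument; it then builds the general inclusion--exclusion identity (its Lemma~\ref{lemma:XkYk-b}, $x_k = y_k - {d \choose k}n - \sum_{j=k+1}^d {j \choose k}x_j$) on top of this special case. You invert that order: your exchange-of-summation double count establishes the unconditional relation
\[
y_k \;=\; {d \choose k}n \;+\; \sum_{j \ge k} {j \choose k}\, x_j
\]
in one pass, and the lemma falls out as the special case where the tail vanishes. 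Your route is slightly longer for this lemma in isolation, but it buys you the paper's next lemma for free and makes explicit a step the paper leaves implicit, namely that the projection-tag column lets the combined stream's self-join size decompose as a plain sum of per-projection self-join sizes. Your bookkeeping on the delicate points (self-pairs sit at $j=d$, not $j=k$; ordered pairs are counted twice in $x_j$, matching the paper's convention) is handled correctly.
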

\begin{proof}
If two records are $k$-similar, then there must be a unique projection at level $k$ under which both records produce the same values; hence those values can join and will 
contribute to $y_k$. However, $y_k$ also includes self-pairs where a record joins itself. The number of those 
pairs is the same as the number of records at level $k$, which is ${d \choose k}n$. Subtracting the two will give $x_k$.
\end{proof}

Now we consider the scenario where the set can have $k+1, k+2, \ldots, d$-similar pairs.
Consider two records $r_1$ and $r_2$ that join at level $d$, meaning they have
the same values in all $d$ attributes. All projections of these records will also join, and this introduces double-counting in join-size estimates at levels $d-1, d-2, \ldots, 1$. 
The size of this projection can be precisely computed with not much effort though. 
\begin{lemma}
\label{lemma:XkYk-b}
For a set of $n$ records and $k \leq d$,
\begin{equation} 
\label{eq:prj-est}
x_k = y_k -{d \choose k}n -  \sum_{j=k+1}^d {j \choose k} x_j.
\end{equation}
\end{lemma}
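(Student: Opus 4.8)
The plan is to derive an exact expression for the level-$k$ self-join size $y_k$ as a linear combination of the quantities $x_j$ for $j \geq k$, and then invert that relationship to isolate $x_k$. The starting point is the observation that $y_k$ counts, over all ${d \choose k}$ projections onto $k$ attributes, the ordered pairs of rows (self-pairs included) that agree on every attribute of the chosen projection, with the appended projection-identifier column ensuring that matches from distinct projections are never conflated. So I would attribute each matching pair back to the similarity level of the underlying records and count how many projections witness it.

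First I would classify the contributions to $y_k$ by the similarity of the contributing pair. Fix an ordered pair of distinct records that is $j$-similar, i.e. that agree on exactly $j$ of the $d$ attributes. Such a pair matches under a level-$k$ projection precisely when the projection's $k$ attributes all lie among the $j$ agreeing attributes, and there are exactly ${j \choose k}$ such projections. Hence every $j$-similar ordered pair contributes ${j \choose k}$ to $y_k$, which is nonzero only for $j \geq k$. Next I would handle the self-pairs: a record agrees with itself on all $d$ attributes, so it matches under every one of the ${d \choose k}$ level-$k$ projections, and the $n$ records together contribute ${d \choose k} n$.

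Summing these contributions gives the master identity
\[
y_k = {d \choose k} n + \sum_{j=k}^{d} {j \choose k}\, x_j .
\]
Isolating the $j=k$ term and using ${k \choose k}=1$ then rearranges to
\[
x_k = y_k - {d \choose k} n - \sum_{j=k+1}^{d} {j \choose k}\, x_j ,
\]
which is the claimed formula. As a consistency check, the hypothesis of Lemma~\ref{lemma:XkYk-a} forces $x_j = 0$ for all $j > k$, collapsing the trailing sum and recovering $x_k = y_k - {d \choose k} n$.

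The main obstacle is the combinatorial bookkeeping in the first step, namely arguing cleanly that a $j$-similar pair is double-counted exactly ${j \choose k}$ times across the level-$k$ projections rather than once. This rests on the fact that the agreeing attributes of a pair form a well-defined set of size $j$, so the family of witnessing projections is exactly the set of $k$-subsets of that set; the projection-identifier column then guarantees that the collapsing of the ${d \choose k}$ projections into one stream introduces neither over- nor under-counting.
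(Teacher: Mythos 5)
Your proof is correct and follows essentially the same argument as the paper's: both rest on the observation that a $j$-similar pair ($j \geq k$) matches under exactly ${j \choose k}$ of the level-$k$ projections while each self-pair contributes ${d \choose k}$, and then subtract the higher-level contributions. You merely package this as a single master identity $y_k = {d \choose k} n + \sum_{j=k}^{d} {j \choose k} x_j$ and invert it, whereas the paper reaches the same formula incrementally by correcting Lemma~\ref{lemma:XkYk-a}.
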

\begin{proof}
Consider two records that are $j$-similar for $j>k$. Then there must be a projection at level $j$ under which the two records emit the same values; all projections of those values at level $k$ will be identical. There are 
${j \choose k}$ such projections. 
With $x_j$ giving the number of record pairs that are $j$-similar, the expression ${j \choose k} x_j$ gives the contribution of all $j$-similar pairs to $y_k$. The rest follows from Lemma~\ref{lemma:XkYk-a}.
\end{proof}



Unlike the approach of Lee et al.~\cite{LNS09} that computes the overlap between signature patterns, which is an approximation with no clear bound on the error, our method  computes the exact size of the overlap between projections.

\subsection{Sampling from the projections}
To calculate the number of pairs that are $s$-similar, we need the self-join sizes at levels $s$ to $d$ of the lattice. Each level of the lattice emits a stream that includes all record projections at that level. As discussed earlier in Section~\ref{sec:our-sjpc}, attaching a projection ordering to each row in this stream allows different projections at the same level all be collapsed into a single stream without introducing an estimation error, hence cutting the number of size estimations.
However, as shown in
Table~\ref{table:sampling-ex} for our running example, 
each row in our initial stream is listed under multiple
projections and all those projections 
contribute to our size estimation. 
Our objective in this section is to cut the size through sampling.

\begin{table}
\caption{Projections at levels 3 and 2}{
\begin{tabular}{cc}
Level 3 & Level 2\\
\begin{tabular}{llll}
Proj & \$1 & \$2 & \$3\\\hline
ABC&$a_1$ & $b_1$ & $c_1$ \\
ABC&$a_2$ & $b_2$ & $c_2$ \\
ABC&$a_1$ & $b_1$ & $c_3$ \\
ABC&$a_3$ & $b_2$ & $c_2$ \\ \hline
\multicolumn{4}{c}{s/join size=4}
\end{tabular} 
&
\begin{tabular}{lll}
Proj&\$1 & \$2\\\hline
AB& $a_1$ & $b_1$\\
AB& $a_2$ & $b_2$\\
AB& $a_1$ & $b_1$\\
AB& $a_3$ & $b_2$\\
AC& $a_1$ & $c_1$\\
AC&$a_2$ & $c_2$\\
AC&$a_1$ & $c_3$\\
AC&$a_3$ & $c_2$\\
BC& $b_1$ & $c_1$\\
BC& $b_2$ & $c_2$\\
BC& $b_1$ & $c_3$\\
BC& $b_2$ & $c_2$\\ \hline
\multicolumn{3}{c}{s/join size=16}
\end{tabular}\\
\end{tabular}}
\label{table:sampling-ex}
\end{table}

Let $0 < r \leq 1$ be our sampling ratio, meaning each row of an emitted stream is selected uniformly into the sample with probability $r$. This is sampling without replacement and is done by uniformly selecting at random $r {d \choose k}$ projections of each record at level $k$.
The sampling here is in the form of inclusion-exclusion (unlike the one discussed in  Section~\ref{sec:randomSampling}) and 
one does not need to store the sample to estimate the self-join size. 
For the same reason, the sample size can grow linearly with the input to avoid the estimation problem discussed in Section~\ref{sec:randomSampling}.


Given a sample as discussed above, let random variables $X_k$ and $X^\prime_k$ be  estimates of $x_k$ for the population and the sample respectively. Also let the random variable $Y_k$ be an estimate of $y_k$ for the sample.
The relationship between the expected number of $k$-similar pairs in the population, $X_k$, and
the self-join size of a sample from a $k$-value stream, $Y_k$, can be expressed as follows. 

\begin{lemma}
\label{lemma:XkYk-sample}
For a set of $n$ records, each of arity $d$, $k \leq d$ and sampling ratio $r$,
\begin{equation}
\label{eq:X_k}
X_k = (Y_k -r {d \choose k}  n) / r^2 -  \sum_{j=k+1}^d  {j \choose k} X_j .
\end{equation}
\end{lemma}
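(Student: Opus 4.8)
The plan is to derive Eq.~\ref{eq:X_k} from the deterministic identity of Lemma~\ref{lemma:XkYk-b} by working out how the self-join size of a \emph{sampled} level-$k$ stream relates, in expectation, to the full self-join size $y_k$. Recall that $y_k = \sum_i m_i^2$, where $m_i$ is the multiplicity of the $i$-th (projection-tagged) value in the level-$k$ stream, and that $\sum_i m_i = {d \choose k} n$ is the total number of rows at level $k$. The quantity $y_k - {d \choose k} n = \sum_i m_i(m_i-1)$ is exactly the number of ordered pairs of \emph{distinct} rows sharing a value, and it is this quantity that the inclusion sampling estimates.

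First I would fix a level $k$ and let $S_i$ denote the number of sampled rows carrying value $i$, so that the sampled self-join size is $\sum_i S_i^2$ and $Y_k$ is its (unbiased) estimate. Writing $S_i = \sum_a I_a$ over the rows $a$ with value $i$, where $I_a$ indicates that row $a$ survives sampling, I would split $S_i^2$ into the $m_i$ self-terms and the $m_i(m_i-1)$ cross-terms. The crucial observation is that each record contributes at most one row to any fixed projection subset, and since the value carries a projection tag, two distinct rows sharing a value necessarily come from two \emph{different} records; because records are sampled independently and each projection survives with marginal probability $r$, such a cross-pair survives with probability $r^2$, whereas a self-pair survives with probability $r$. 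Taking expectations gives
\[
E[Y_k] = r \sum_i m_i + r^2 \sum_i m_i(m_i-1) = r {d \choose k} n + r^2 \Big( y_k - {d \choose k} n \Big),
\]
so that $(Y_k - r {d \choose k} n)/r^2$ is an unbiased estimator of $y_k - {d \choose k} n$.

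Finally I would substitute this estimator into Lemma~\ref{lemma:XkYk-b}, rewritten as $x_k = (y_k - {d \choose k} n) - \sum_{j=k+1}^d {j \choose k} x_j$, and replace each population quantity $x_j$ by its estimator $X_j$ to obtain Eq.~\ref{eq:X_k}; unbiasedness then propagates down the levels by a straightforward induction anchored at $k=d$. The main obstacle is the expectation computation, and specifically justifying the $r$-versus-$r^2$ dichotomy: one must argue carefully that the projection tag prevents two projections of the \emph{same} record from ever colliding on a value, so that every genuine cross-pair involves two independently sampled records, while also confirming that the per-record scheme of choosing $r {d \choose k}$ of the ${d \choose k}$ projections gives each individual projection the correct marginal survival probability $r$ (and that the negative correlation \emph{within} a record is irrelevant here). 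Everything after that is routine algebra.
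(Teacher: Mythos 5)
Your proposal is correct and follows essentially the same route as the paper's own proof: the paper likewise obtains $X_k = (Y_k - r {d \choose k} n)/r^2$ from the observation that self-pairs survive sampling with probability $r$ while cross-pairs survive with probability $r^2$, and then removes the contributions of $j$-similar pairs ($j>k$) via the ${j \choose k} X_j$ terms justified by Lemma~\ref{lemma:XkYk-b}, using a downward induction anchored at $k=d$. Your indicator-variable expectation computation, together with the explicit checks that the projection tag forces cross-pairs to come from distinct (hence independently sampled) records and that each projection has marginal survival probability $r$ despite within-record negative correlation, is a more careful rendering of what the paper states informally as $X^\prime_k = r^2 X_k$.
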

\begin{proof}
Let us initially assume that there are no $k+1, k+2, \ldots, d$-similar pairs. 
Given that each record is included with probability $r$, giving us a sample of size $r {d \choose k} n$, the relationship between 
$Y_k$ and $X^\prime_k$ can be expressed as
\[
Y_k = X^\prime_k + r {d \choose k} n.
\]
For a pair of $k-$similar records, the probability that they both make to the sample (and are counted in $X^\prime_k$) 
is $r^2$; hence $X^\prime_k = r^2 X_k$. Replacing this in the equation above, we get 
$Y_k = r^2 X_k + r {d \choose k} n$ and this can be rewritten as
\begin{equation}
\label{eq:X_kmid}
X_k = (Y_k -r {d \choose k}  n) / r^2.
\end{equation}
We can now relax our assumption and show by induction on $k$ that 
the statement of the lemma holds. 
The basis holds for $X_d$. Now suppose  the lemma holds for $X_{k+1},\ldots,X_d$,  meaning we can drive the values of 
$X_{k+1}, \ldots, X_d$ using Eq.~\ref{eq:X_k}. Then the contributions of $k+1, \ldots,d$-similar pairs toward $X_{k}$ can be computed as $\sum_{j=k+1}^d  {j \choose k} X_j$ (see the discussion in the proof of
Lemma~\ref{lemma:XkYk-b}). Subtracting this quantity from our earlier estimate in Eq.~\ref{eq:X_kmid} will give the final result, and this completes our proof.
\end{proof}



\subsection{The algorithm}

Our one-pass \textit{Similarity Self-Join Pair Count} (SJPC) method is depicted in Algorithm~\ref{algo:sjpc}.  
The algorithm can be broken down into three main steps.


\begin{algorithm}
\caption{SJPC algorithm}
\label{algo:sjpc}
\SetKwInOut{Input}{input}
\SetKwInOut{Output}{output}
\SetKwFunction{KwFn}{{\bf Procedure} f2toPairCnt}
\Input{Similarity threshold $s$, sampling ratio $r$, sketch width $wd$ and sketch depth $dp$}
\Input{A stream with $d$ columns} 
\Output{Similarity self-join size of the stream}
\BlankLine

Initialize $(d-s+1)$ sketches, each of width $wd$ and depth $dp$\;
\For{$k\leftarrow s$ to $d$}{
  colComb[$k$].size = $d \choose k$\;
  colComb[$k$].list = list of all $k$ column combinations of $d$ attributes\;
}
\For{each row in the stream}{
  \For{$k\leftarrow s$ to $d$}{
    sampleSize= colComb[$k$].size * r\;
    \If{sampleSize is not an integer}{
     round it up with probability $sampleSize-truncate(sampleSize)$ and round it down with the remaining probability\;
    }
    Let $C$ be $sampleSize$ entries from colComb[$k$].list chosen uniformly at random\;
    \For{each $c \in C$}{
      let $p$ be the projection of the row on column combination $c$\;
      $p$ = concat($c$,$p$)\; 
      fp = fingerprint($p$)\;
      sketch\_insert($k$, $fp$)\;
    }
  }
}
\For{$k\leftarrow s$ to $d$}{
  Y[k] = sketch\_estimateF2($k$)\;
}
Let $n$ be the size of the input stream (in terms of the number of rows)\;
f2toPairCnt(d,s,n,r,X,Y) \\
estimate = 0\;
\lFor{$k\leftarrow s$ to $d$}{
  estimate += X[k]
}  
return estimate\; 

\BlankLine
\KwFn{d,s,n,r,X,Y}\\
\For{$k\leftarrow d$ \emph{downto} s} {
  sampleSize= $d \choose k$ * r * n\;
  X[k] = Y[k] - sampleSize\;
  \For{$j\leftarrow k+1$ \emph{\KwTo} d} {
    X[k] -= $r^2$*$j \choose k$ * X[j]\;
  }
  X[k] = (X[k] $<$ 0) ? 0 : X[k];  // estimates cannot be negative
}
\lFor{$k\leftarrow s$ \emph{\KwTo} d} {
  X[k] /= $r^2$
}
\end{algorithm}

\textbf{Step 1: Generate projections and construct sketches (lines 1-20).} 
For each record and each $k=s, \dots, d$, 
the algorithm selects $k$ different attributes uniformly at random, and 
projects the record under these attributes (with duplicates kept);
the projected attribute values are encoded into a string along with the text of the attribute combination. 
We call this record 
a \textit{$k$-sub-value}, and the set of all $k$-sub-values at level $k$ a \textit{$k$-sub-value stream}.
For example, if the selected attributes for a row are A, B and C and their respective values are $a_1$, $b_1$ and $c_3$, then the generated $3$-sub-value will be $ABC.a_1.b_1.c_3$.
With this coding, all $k$-sub-values can be placed on the same stream and no two sub-values from different projections can join. This would reduce the number of self-join size estimations at level $k$ of the lattice from ${d \choose k}$ to one.
The process is repeated $l_k=r { d \choose k} $ times ($0 < r < 1$).

This step will produce $d-s+1$ sub-value streams, one for
each  $k=s \dots d$, and the number of $k$-sub-values in each sub-value stream is controlled by the sampling ratio $r$. Sub-values may be fingerprinted into more concise fixed length strings~\cite{broder93}, and a sketch may be constructed for each  sub-value stream instead of directly storing it. There are several sketching algorithms that estimate the self- join size in one pass~\cite{AMS99JCSS,SketchNet05VLDB}. We use Fast-AGMS~\cite{SketchNet05VLDB}, which  maintains $w$ counters (sketch width) and map elements in the stream
into one of those counters. Two $4$-universal hash functions $h_1$ and $h_2$ are used where
$h_1$ maps each element into
either $-1$ or $1$ and $h_2$ maps it into $[1, \ldots, w]$, both uniformly at random. For each incoming element $e$, the sketch is updated by
adding $h_1(e)$ to the counter at index $h_2(e)$.
Once the stream is processed, the self-join size is estimated by adding up the squares of all counter values.
In our case, $d-s+1$ sketches are needed to estimate the self-join sizes for that many sub-value streams.
To provide a better error bound, the process is often repeated $t$ times (sketch depth) and the median of those $t$ estimates are chosen. The sketch requires $t w$ counters to implement, and we are constructing $d-s+1$ such sketches for our estimation.

\textbf{Step 2: Find the self-join sizes (lines 21-23).}
The algorithm, finds the self-join size of each sub-value stream, using standard self-join size estimation methods~\cite{SketchNet05VLDB}.

\textbf{Step 3: Estimate the similarity self-join size (lines 24-28).}
With the self-join sizes $Y_k$ computed for $k=s, \ldots, d$ in the previous step, the similarity self-join size can be computed using Equation~\ref{eq:X_k}.

As an example, let $d=6$, $s=4$ and $r=0.5$ and suppose $Y_6$, $Y_5$ and $Y_4$ are computed; 
we can compute the similarity self-join size by adding up $X_4$, $X_5$ and $X_6$, where
the latter can be obtained by solving the following equation system:
\begin{equation}
\begin{cases}
  Y_6 = 0.25 X_6 + 0.5 n \\
  Y_5 = 1.5 X_6 + 0.25 X_5 + 3 n  \\
  Y_4 = 7.5 X_6 + 1.25 X_5 + 0.25 X_4 + 7.5 n.
\end{cases}
\end{equation}

Step 1 can be done while the input is being read, 
and Step 3 simply takes $d-s+1$ self-join size estimates and computes $X_k$ using Eq.~\ref{eq:X_k}, which is straightforward. This leaves us with the self-join size estimation in Step 2 for which one-pass algorithms are available. 

\section{Analysis}
\label{sec:analysis}
There are two sources of randomness in the proposed algorithm: (1) randomness due to the sampling in Step 1, and (2) randomness due to the self-join size estimation in Step 2. To get a better insight into the algorithm and its steps, we analyze it both without and with the randomness in Step 2. We refer to the case where an exact self-join size is computed in Step 2 as {\it the offline case}, and the case where this is estimated using a sketch as {\it the online case}.
  
\begin{theorem}\label{thm:E_VAR_offline}
(Unbiased estimate and variance - offline case)
The SJPC algorithm gives an unbiased estimate of the 
$s$-similarity self-join size under the offline scenario, i.e. $E[G_s] = g_s$, and
%
%
 $var[G_s/g_s]$ is at most 
$${d \choose s}^2\frac{1}{r} {2(d-s) \choose {d-s}} / g_s, $$
where $G_s$ is the estimate and $g_s$ is the true value.
\end{theorem}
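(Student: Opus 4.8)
The plan is to prove the two assertions — unbiasedness and the variance bound — separately, using the triangular system of Lemma~\ref{lemma:XkYk-sample} as the backbone, and to remember that in the offline case the only randomness is the sampling in Step~1 (each $Y_k$ is the \emph{exact} self-join size of a sampled level-$k$ stream). For unbiasedness I would first compute $E[Y_k]$ by tracking how the sampled stream is formed: a self-pair $(a,a)$ contributes $1$ to $Y_k$ for each of its $\binom{d}{k}$ projections that is sampled, contributing $r\binom{d}{k}n$ in expectation over all records, while an ordered pair that is exactly $j$-similar ($j\ge k$) agrees on $\binom{j}{k}$ of the level-$k$ projections and is counted once for each such projection sampled in \emph{both} records, i.e. with probability $r^2$. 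Summing over similarity classes gives $E[Y_k]=r\binom{d}{k}n+r^2\sum_{j\ge k}\binom{j}{k}x_j$. I would then run a downward induction on $k$ (base case $k=d$): assuming $E[X_j]=x_j$ for $j>k$, substitute this $E[Y_k]$ into Eq.~\ref{eq:X_k} and use $\sum_{j\ge k}\binom{j}{k}x_j-\sum_{j>k}\binom{j}{k}x_j=x_k$ to get $E[X_k]=x_k$. Adding the deterministic $n$ self-pairs then yields $E[G_s]=\sum_{k=s}^d x_k+n=g_s$ as in Eq.~\ref{eq:gsDef}.

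For the variance the key preparatory step is to solve the triangular system in closed form. Binomial (M\"obius) inversion turns Eq.~\ref{eq:X_k} into $X_k=\sum_{j=k}^d(-1)^{j-k}\binom{j}{k}\hat y_j$ with $\hat y_j=(Y_j-r\binom{d}{j}n)/r^2$, and summing over $k$ collapses the inner alternating sum via the partial-sum identity $\sum_{k=s}^{j}(-1)^{j-k}\binom{j}{k}=(-1)^{j-s}\binom{j-1}{s-1}$, giving a single linear functional of the level self-join sizes:
\[
G_s=\sum_{j=s}^d c_j\,\hat y_j,\qquad c_j=(-1)^{j-s}\binom{j-1}{s-1}.
\]
Because all the $\hat y_j$ are built from one common sample, the $Y_j$ are correlated, so rather than expand the full covariance matrix I would apply the triangle inequality for standard deviations, $\mathrm{sd}[G_s]\le\frac{1}{r^2}\sum_{j=s}^d|c_j|\,\mathrm{sd}[Y_j]$. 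The factor $\binom{d}{s}^2$ in the statement is then explained by the hockey-stick identity $\sum_{j=s}^d|c_j|=\sum_{j=s}^d\binom{j-1}{s-1}=\binom{d}{s}$, which controls the outer sum once the $\mathrm{sd}[Y_j]$ are bounded by a common quantity.

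The remaining and hardest ingredient is a bound on a single-level sampled self-join variance $\mathrm{Var}[Y_j]$. Here I would decompose $Y_j$ into contributions of ordered record pairs, noting that two pair-contributions are uncorrelated unless the two pairs share a record (sampling is independent across records). Bounding the diagonal (single-pair) terms together with the shared-record cross terms in terms of the number of at-least-$s$-similar pairs should produce the linear dependence on $g_s$ and the $1/r$ factor, after which dividing the bound on $\mathrm{Var}[G_s]$ by $g_s^2$ gives the claimed bound on $\mathrm{var}[G_s/g_s]$. The central binomial coefficient $\binom{2(d-s)}{d-s}$ should emerge when the per-level bounds are recombined through the coefficients $c_j$, via a Vandermonde convolution $\binom{2m}{m}=\sum_i\binom{m}{i}^2$ with $m=d-s$.

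I expect this last recombination to be the main obstacle. The delicate points are (i) handling the two competing correlations — the \emph{negative} correlation among same-level projections of a single record induced by without-replacement sampling, and the \emph{positive} covariances between distinct record-pairs that share a common record — and (ii) showing that, after these are accounted for, the leading coefficient collapses to exactly $\binom{d}{s}^2\frac{1}{r}\binom{2(d-s)}{d-s}$ rather than a looser constant. In particular, demonstrating the $1/r$ scaling (as opposed to a higher inverse power of $r$) requires that the per-pair variance estimates interact so that the dominant term is controlled by $g_s$ linearly, and verifying this cancellation carefully is where the bulk of the work lies.
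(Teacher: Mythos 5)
Your unbiasedness argument (computing $E[Y_k]=r\binom{d}{k}n+r^2\sum_{j\ge k}\binom{j}{k}x_j$, then downward induction), the binomial inversion $X_k=\frac{1}{r^2}\sum_{j\ge k}(-1)^{j-k}\binom{j}{k}Y_j+\mathrm{const}$, and the plan for bounding a single-level $\mathrm{Var}[Y_j]$ by a pair-by-pair decomposition all coincide with the paper's proof. The gap is in the recombination step, and it is decisive. Your premise that the $Y_j$ must be treated as correlated is mistaken: in the algorithm each level $k$ draws its projections independently (for every record, a fresh uniform choice of $k$-subsets), so with the data fixed the variables $Y_s,\dots,Y_d$ are independent, and the paper (implicitly) exploits this to write $\mathrm{Var}[G_s]\le\frac{1}{r^4}\sum_{k}\binom{k}{s}^2\mathrm{Var}[Y_k]$, i.e.\ variances add with \emph{squared} coefficients. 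It is precisely these squared coefficients that later become $\sum_{k}\binom{d-s}{k-s}^2=\binom{2(d-s)}{d-s}$ by Vandermonde. Your substitute, the triangle inequality for standard deviations, is valid under arbitrary correlation but provably too weak here: plugging the per-level bound $\mathrm{Var}[Y_j]\le r^3\binom{d}{j}\sum_{i\ge j}\binom{i}{j}x_i$ into $\mathrm{sd}[G_s]\le\frac{1}{r^2}\sum_j|c_j|\,\mathrm{sd}[Y_j]$ and using $\binom{j}{s}^2\binom{d}{j}\binom{i}{j}\le\binom{d}{s}^2\binom{d-s}{j-s}^2$ gives at best $\mathrm{Var}[G_s]\le\frac{1}{r}\binom{d}{s}^2\bigl(\sum_j\binom{d-s}{j-s}\bigr)^2 g_s=\frac{1}{r}\binom{d}{s}^2\,4^{d-s}g_s$, and $4^{d-s}>\binom{2(d-s)}{d-s}$ for all $d>s$; the cross terms created by squaring a sum of standard deviations are exactly what the claimed constant does not tolerate.

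A second, related flaw is your accounting of where $\binom{d}{s}^2$ comes from. It is not the hockey-stick sum $\sum_j\binom{j-1}{s-1}=\binom{d}{s}$ applied against a ``common bound'' on $\mathrm{sd}[Y_j]$: no uniform bound of the form $\mathrm{Var}[Y_j]\le r^3\binom{2(d-s)}{d-s}g_s$ holds, since at $j=s$ the per-level bound can reach $r^3\binom{d}{s}^2g_s$, and $\binom{d}{s}^2>\binom{2(d-s)}{d-s}$ whenever $s>d/2$ (e.g.\ $225$ vs.\ $6$ for $d=6$, $s=4$); factoring out a maximum therefore yields constants of order $\binom{d}{s}^4$. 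In the paper the level coefficient is instead \emph{coupled} to the per-level variance bound through the identities $\binom{k}{s}\binom{j}{k}=\binom{j}{s}\binom{j-s}{k-s}$ and $\binom{k}{s}\binom{d}{k}=\binom{d}{s}\binom{d-s}{k-s}$; bounding $\binom{j}{s}\le\binom{d}{s}$ produces the factor $\binom{d}{s}^2$, and the leftover $\binom{d-s}{k-s}^2$, summed over $k$, gives the central binomial coefficient. With cross-level independence (variance additivity) plus this coupling, your outline goes through and matches the paper; without them it proves only a strictly weaker statement.
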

\begin{proof}
\ifdefined\Archive
  See Appendix~\ref{sec:proofs}. 
\else
  See the extended version~\cite{RFarxiv2018}.
\fi
\end{proof}
\textit{Remarks.}
Since the estimate is unbiased, $var(\frac{G_s}{g_s})$ can be considered
as a measure of relative error in practice. There are a few observations that can be made.
First, this is an upper bound of the error and the actual error is expected to be less. More specifically, when $r=1$, the estimate has no error (see Lemma~\ref{lemma:XkYk-b}) whereas the bound can still be large depending on $d$ and $s$.
Second, the variance increases significantly as the gap between $s$ and $d$ widens. However, in many practical settings such as duplicate detection, often higher similarities (80\% or higher) are sought; in those cases, the error is expected to be low, as shown in Figure~\ref{fig:varianceOfflineOnline} (left) as well.
Third, when other parameters are fixed,
the expected relative error decreases when the true similarity self-join size increases.
Assuming $g_s$ increases quadratically with $n$ (which is the case in our real
dataset), the relative error goes down linearly with $n$. 
The results shown in the experiment section confirm this observation.

\begin{figure*}[tb]
\centering
\begin{minipage}{.3\textwidth}
\hspace*{-0.5cm}
\includegraphics[width=2.45in]{err-ub-offline.eps}
\end{minipage}
\hspace*{0.25cm}
\begin{minipage}{.3\textwidth}
\hspace*{-0.5cm}
\includegraphics[width=2.45in]{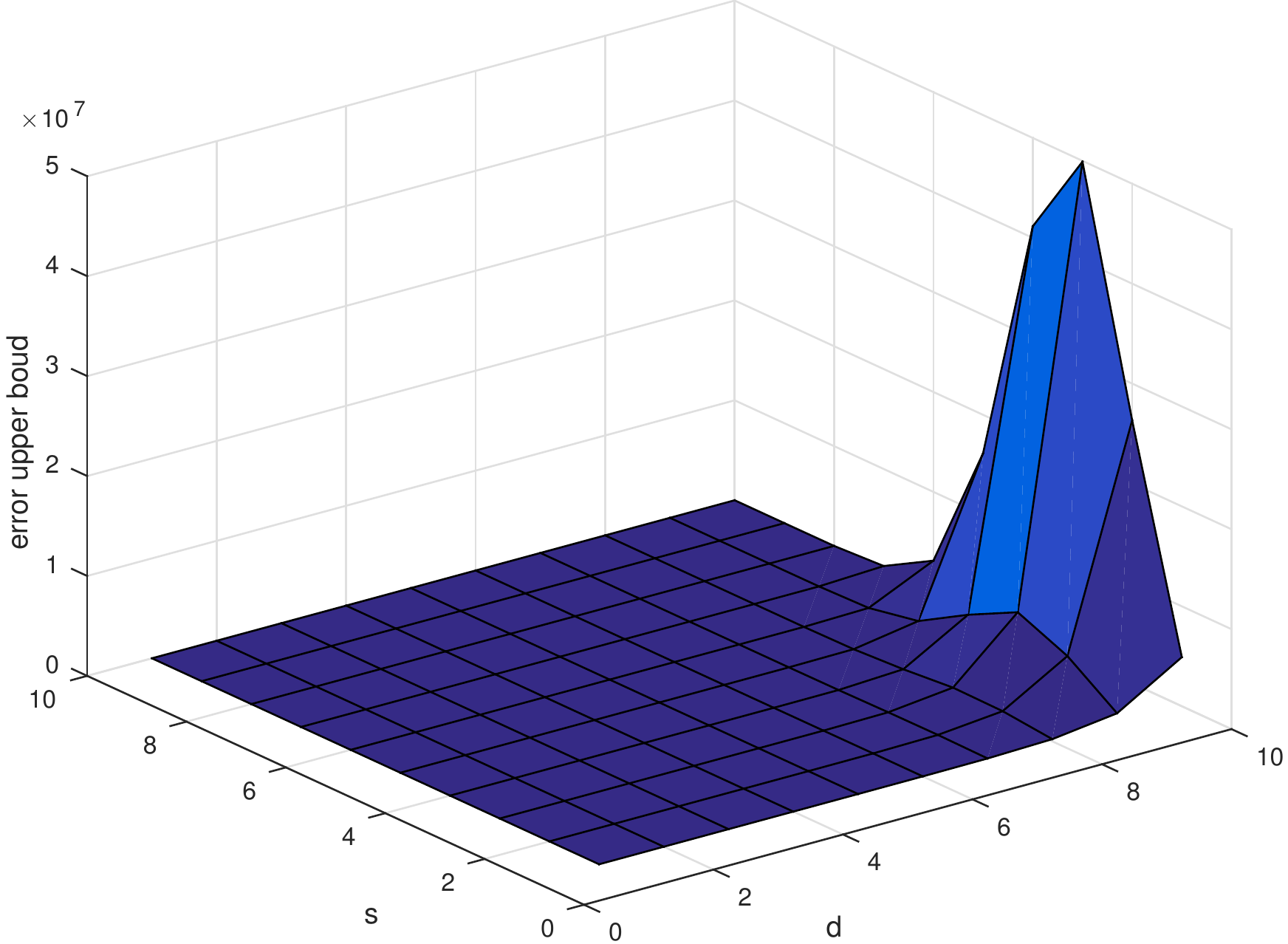}
\end{minipage}
\hspace*{0.25cm}
\begin{minipage}{.3\textwidth}
\hspace*{-0.5cm}
\includegraphics[width=2.45in]{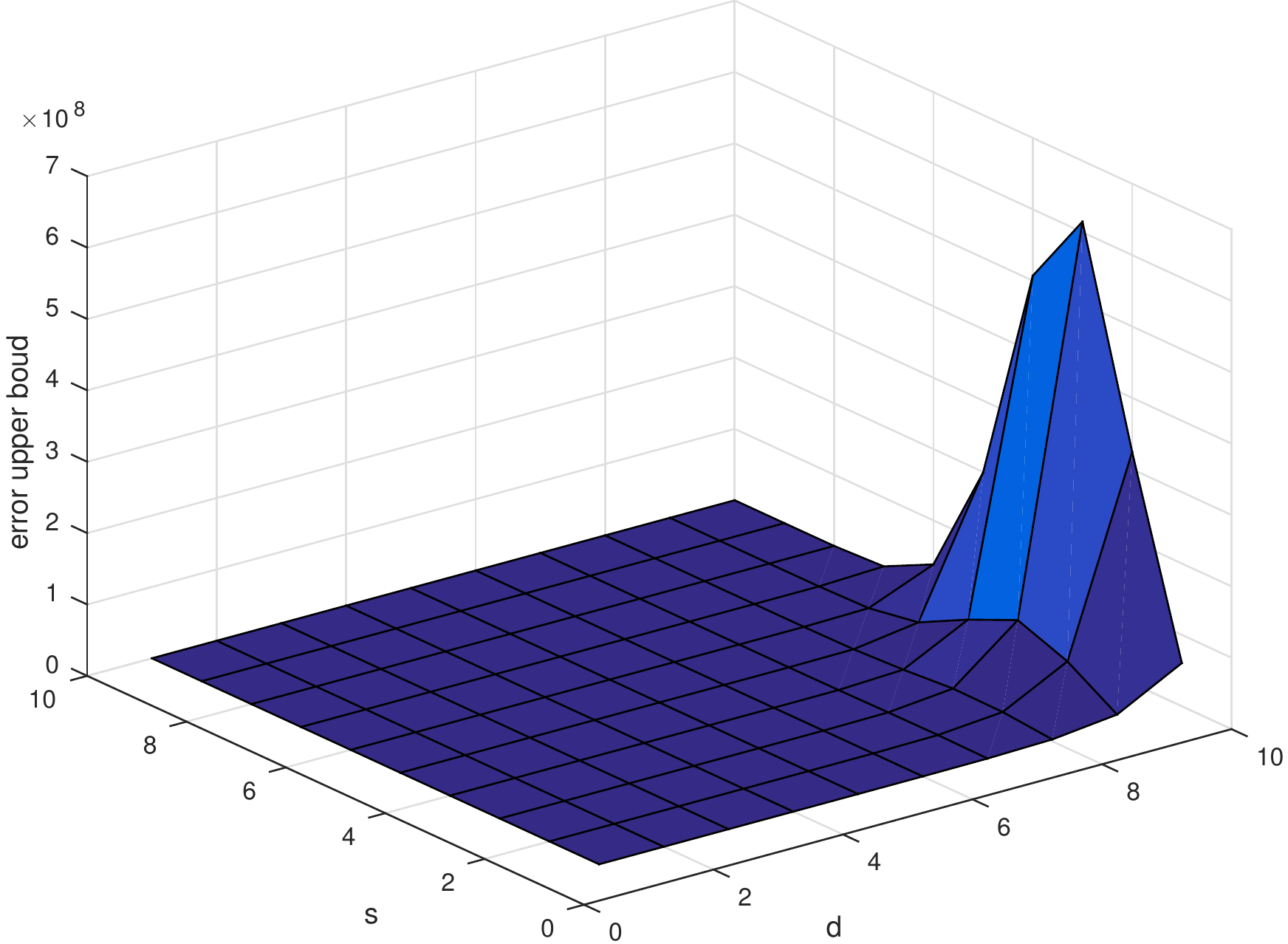}
\end{minipage}
\caption{Error upper bound (i.e. the absolute error $var(G_s)$) with gs=1 in (left) the offline case for r=1, (middle) the online case for r=1, w=1000, and (right) the online case for r=$0.1$, w=1000. Note the term $n/g_s$ in Theorem~\ref{thm:online-SJPC} (which is not larger than 1) is dropped to derive a weaker upper bound.}
\label{fig:varianceOfflineOnline}
\end{figure*}


When the dimensionality $d$ increases, the expected error increases and
time and space costs are also affected. Since the algorithm generates 
$r \sum_{k=s}^d {d \choose k}$ sub-values for each record, 
if the sampling ratio $r$ is chosen such that $\lceil r {d \choose k} \rceil < c$ for some constant $c$, 
then both the space and time costs in an offline case is
$O((d-s+1)n)$ for processing all records. Also for large $d$, it may be possible to select a subset of the columns and gauge the similarity based on the subset. 

Next we report the performance of our algorithm under an online scenario where the self-join size in Step 2 is estimated using a sketch.

\begin{theorem}\label{thm:online-SJPC}
(Unbiased estimate and variance)
The SJPC algorithm gives an unbiased estimate of the
$s$-similarity self-join size in an online scenario, i.e. $E[G_s] = g_s$, and the variance
of $\frac{G_s} {g_s}$ is at most
$$
{d \choose s}^2 \frac{1}{r}  {2(d-s) \choose d-s} 
((1+\frac{2}{w})/ g_s + \frac{2}{w} (1 + \frac{n}{r g_s})^2),
$$
%
where $w$ is the Fast-AGMS sketch width (depth is $1$), 
$d$ is the number of attributes, $s$ is the given similarity 
threshold, $r$ is the sampling ratio, $g_s$ is the true value
of the similarity self-join size, and $G_s$ is the estimated value.
\end{theorem}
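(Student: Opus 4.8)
The plan is to reduce the online analysis to the offline result (Theorem~\ref{thm:E_VAR_offline}) by treating the two sources of randomness as nested layers: condition on the outcome of the Step-1 sampling, and treat the Fast-AGMS noise in Step~2 as a conditionally independent inner layer. Write $\mathcal{A}$ for the sampling outcome, so that conditioned on $\mathcal{A}$ the sub-value streams are fixed and each exact sample self-join size $Y_k$ is a constant. The two facts I would import about a width-$w$, depth-$1$ Fast-AGMS sketch are: (i) conditional unbiasedness, $E[\hat Y_k\mid\mathcal{A}]=Y_k$, and (ii) the conditional variance bound $var(\hat Y_k\mid\mathcal{A})\le\tfrac{2}{w}Y_k^2$. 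I would also use that the $d-s+1$ per-level sketches are built with independent hashes, so $\hat Y_s,\dots,\hat Y_d$ are conditionally independent given $\mathcal{A}$.

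For unbiasedness, I would first solve the triangular recurrence of Lemma~\ref{lemma:XkYk-sample} to express the online estimate as a fixed affine function of the sketch values, $\hat G_s=\sum_{k=s}^d a_k\hat Y_k+b\,n$, where the coefficients $a_k$ and the constant depend only on $d$, $s$, $r$ (and each $a_k$ carries a $1/r^2$ factor coming from the division in Eq.~\ref{eq:X_k}). By conditional unbiasedness of each $\hat Y_k$ and linearity, $E[\hat G_s\mid\mathcal{A}]=\sum_k a_kY_k+b\,n=G_s^{\mathrm{off}}$, the offline estimator on the same sample. Taking the outer expectation and invoking offline unbiasedness from Theorem~\ref{thm:E_VAR_offline} gives $E[\hat G_s]=E[G_s^{\mathrm{off}}]=g_s$.

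For the variance I would apply the law of total variance,
\[
var(\hat G_s)=var\!\big(E[\hat G_s\mid\mathcal{A}]\big)+E\!\big[var(\hat G_s\mid\mathcal{A})\big]=var(G_s^{\mathrm{off}})+E\!\big[var(\hat G_s\mid\mathcal{A})\big].
\]
The first summand is exactly the offline variance, already bounded by Theorem~\ref{thm:E_VAR_offline}; after dividing by $g_s^2$ it supplies the leading $\binom{d}{s}^2\frac1r\binom{2(d-s)}{d-s}/g_s$ term. For the second summand, conditional independence of the sketches kills all cross terms, so $var(\hat G_s\mid\mathcal{A})=\sum_{k=s}^d a_k^2\,var(\hat Y_k\mid\mathcal{A})\le\tfrac{2}{w}\sum_{k=s}^d a_k^2Y_k^2$, whence $E[var(\hat G_s\mid\mathcal{A})]\le\tfrac{2}{w}\sum_k a_k^2\big(var(Y_k)+(E[Y_k])^2\big)$. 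Here I would use $E[Y_k]=r^2\sum_{j=k}^d\binom{j}{k}x_j+r\binom{d}{k}n$ (the contribution argument of Lemma~\ref{lemma:XkYk-b}). The anticipated decomposition is that the $\sum_k a_k^2\,var(Y_k)$ piece is dominated by the very quadratic form Theorem~\ref{thm:E_VAR_offline} already bounds, so it contributes $\tfrac{2}{w}$ times the offline bound and produces the $\tfrac{2}{w}/g_s$ correction that turns $1/g_s$ into $(1+\tfrac{2}{w})/g_s$; while the $\sum_k a_k^2(E[Y_k])^2$ piece, after normalization, produces the additive $\tfrac{2}{w}(1+\tfrac{n}{rg_s})^2$ term, the self-pair mass $r\binom{d}{k}n$ being what becomes the $\tfrac{n}{rg_s}$ inside the square.

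The hard part will be the combinatorial bookkeeping that makes these three contributions collapse into the stated closed form with the \emph{same} prefactor $\binom{d}{s}^2\frac1r\binom{2(d-s)}{d-s}$. Concretely, I expect the weight $\binom{2(d-s)}{d-s}$ to emerge from a Vandermonde convolution $\sum_i\binom{d-s}{i}^2=\binom{2(d-s)}{d-s}$ applied to the squared coefficients $a_k^2$, and I must verify the $r$-scaling: the $1/r^4$ hidden in $a_k^2$ must cancel against the $r^4$ produced by $(E[Y_k])^2\le\big(r^2\binom{d}{k}g_s+r\binom{d}{k}n\big)^2$ so that only the advertised $\tfrac1r$ survives. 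The subtle point is that the sketch term sees only the diagonal $\sum_k a_k^2\,var(Y_k)$ whereas the offline variance includes covariances, so I would need to confirm that the diagonal part is still dominated by the offline upper bound; establishing that domination, together with the two binomial identities, is where the real work lies, with the unbiasedness and the total-variance split being comparatively routine.
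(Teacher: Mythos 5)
Your proposal is correct and follows essentially the same route as the paper's proof: reduce to the offline result, apply the law of total variance with the Fast-AGMS facts $E[\hat Y_k\mid Y_k]=Y_k$ and $var(\hat Y_k\mid Y_k)\le \frac{2}{w}Y_k^2$, expand $E[Y_k^2]=var(Y_k)+E[Y_k]^2$, bound $E[Y_k]$ by $r^2{d \choose k}g_s+nr{d\choose k}$, and collapse the sums with ${k \choose s}{d \choose k}={d \choose s}{d-s \choose k-s}$ and the Vandermonde identity $\sum_k{d-s \choose k-s}^2={2(d-s) \choose d-s}$. The only (immaterial) difference is that the paper applies total variance per level, conditioning each sketch estimate $Y_k'$ on $Y_k$, whereas you condition the whole estimator on the sampling outcome $\mathcal{A}$; given the conditional independence of the per-level sketches both yield the identical decomposition $(1+\frac{2}{w})\sum_k a_k^2\,var(Y_k)+\frac{2}{w}\sum_k a_k^2 E[Y_k]^2$, and the diagonal-domination point you flag does hold because cross-level covariances vanish under independent per-level sampling.
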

\begin{proof}
\ifdefined\Archive
  See Appendix~\ref{sec:proofs}. 
\else
  See the extended version~\cite{RFarxiv2018}.
\fi
\end{proof}
%

\textit{Remarks.}
A few observations can be made here. First, as in the offline case, this is an upper bound of the error. In particular, when the sampling ratio is close to $1$, the offline estimates are expected to be accurate and the only source of error is from $d-s+1$ sketches.
Second, the variance gets a hit as the gap between $d$ and $s$  widens or $d$ becomes large. 
Again, this is not an issue in many practical settings where a high similarity (80\% or higher) is desired.
Third, to bound the variance, the space usage (denoted by $w$) does not have to increase when $n$ increases as long as
$g_s$ increases proportionally, which is usually an expected case.
Finally, the statement provides a formulation of the interaction between sketching and sampling, and how the variance changes with $r$ (see the right two columns of Figure~\ref{fig:varianceOfflineOnline}). A similar (but more extensive) study on the interaction between sketching and sampling in a different context is conducted by Rusu and Dobra~\cite{RusuD09}, where they reach the same conclusion that sketching over samples is a viable option, reducing the processing time with not much loss in accuracy.

\begin{theorem}\label{thm:selectivity_online}
(Space and time cost to bound the selectivity estimation error)
The SJPC algorithm guarantees that the
estimated selectivity of the similarity self-join deviates from the true value
by at most $\epsilon$ with probability at least $1-\lambda$. More precisely,
$Pr[|\hat{\theta}_s - \theta_s| \leq \epsilon] \ge 1-\lambda$, where
$\hat{\theta}_s$ is the estimated selectivity and $\theta_s$ is the true
value. The space cost is $O(\log(1/\lambda) (d-s+1) w)$,
and the time cost for processing each record is
$O(\log(1/\lambda) {d \choose s}^2 {2(d-s) \choose d-s} 
(\sum_{k=s}^d {d \choose k}) / (\epsilon^2 w)) $.
\end{theorem}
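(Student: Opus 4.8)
The plan is to combine the variance bound of Theorem~\ref{thm:online-SJPC} with Chebyshev's inequality for a single estimator, and then boost the success probability from a constant to $1-\lambda$ by running $O(\log(1/\lambda))$ independent copies of the whole estimator and reporting the median. Throughout I take the selectivity to be the normalized self-join size $\theta_s = g_s/n^2$ (the fraction of the $n^2$ candidate ordered pairs that are at least $s$-similar), so that $\hat{\theta}_s = G_s/n^2$ and $|\hat{\theta}_s-\theta_s| = |G_s-g_s|/n^2$; consequently $var(\hat{\theta}_s) = var(G_s)/n^4 = (g_s/n^2)^2\, var(G_s/g_s)$, which lets me feed the relative-variance bound of Theorem~\ref{thm:online-SJPC} directly into the argument.

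First I would convert that relative bound into an absolute bound on $var(\hat{\theta}_s)$. Writing $A = {d\choose s}^2{2(d-s)\choose d-s}$ and multiplying the bound by $(g_s/n^2)^2$ gives
\[
var(\hat{\theta}_s) \le \frac{A}{rn^4}\Big(g_s(1+\tfrac2w) + \tfrac2w\big(g_s+\tfrac nr\big)^2\Big).
\]
At this point I would invoke the two elementary facts $n \le g_s \le n^2$ (the lower bound is the $n$ self-pairs in Eq.~\ref{eq:gsDef}, the upper bound is that there are at most $n^2$ ordered pairs), which justify dropping the $n/g_s$ term exactly as noted after Theorem~\ref{thm:online-SJPC} and bounding each factor $g_s/n^2$ by $1$. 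The dominant contribution for large $n$ is the $\tfrac2w g_s^2$ piece, giving $var(\hat{\theta}_s) = O(A/(rw))$, with the remaining pieces of lower order in the regime of interest (e.g. $w=O(n^2)$ and $rn = \Omega(1)$).

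Next I would apply Chebyshev to a single estimator, $Pr[|\hat{\theta}_s-\theta_s| \ge \epsilon] \le var(\hat{\theta}_s)/\epsilon^2 = O(A/(rw\epsilon^2))$, and choose the sampling ratio $r$ so that this failure probability is at most a fixed constant below $1/2$; this forces $r = \Theta(A/(w\epsilon^2))$ (capped at $1$, which is feasible once $w$ is large enough that $A/(w\epsilon^2)\le 1$). With a single estimator now succeeding with constant probability, a standard median-of-$T$ argument over $T = O(\log(1/\lambda))$ independent copies, together with a Chernoff bound on the number of copies whose estimate lands in $[\theta_s-\epsilon,\theta_s+\epsilon]$, drives the failure probability of the median below $\lambda$. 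This is precisely the source of the $\log(1/\lambda)$ factor appearing in both cost expressions.

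Finally I would read off the two costs. Each of the $T$ copies maintains $d-s+1$ Fast-AGMS sketches of width $w$ and depth $1$, so the total space is $O(\log(1/\lambda)(d-s+1)w)$. Each copy generates, in expectation, $\sum_{k=s}^d r{d\choose k} = O(r\sum_{k=s}^d{d\choose k})$ sub-values per record and performs one sketch update per sub-value; multiplying by the $T$ copies and substituting $r = \Theta(A/(w\epsilon^2))$ yields the per-record time $O\big(\log(1/\lambda){d\choose s}^2{2(d-s)\choose d-s}(\sum_{k=s}^d{d\choose k})/(\epsilon^2 w)\big)$, matching the claim. The step I expect to be the main obstacle is the variance accounting: the bound from Theorem~\ref{thm:online-SJPC} is a sum of terms with distinct dependencies on $n$, $g_s$, $r$, and $w$, and the delicate part is using $n \le g_s \le n^2$ to show that the single term scaling like $A/(rw)$ dominates, so that the one choice $r = \Theta(A/(w\epsilon^2))$ simultaneously certifies the constant single-copy error and reproduces exactly the stated time bound.
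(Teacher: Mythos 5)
Your proposal is correct and follows essentially the same route as the paper's own proof: both plug the variance bound of Theorem~\ref{thm:online-SJPC} into Chebyshev's inequality, choose $r = \Theta\bigl({d \choose s}^2 {2(d-s) \choose d-s}/(\epsilon^2 w)\bigr)$ so that a single estimator fails with at most constant probability (using $g_s \le n^2$ and $nr \ge 1$ to discard the lower-order terms), and then boost to $1-\lambda$ by taking the median of $O(\log(1/\lambda))$ independent copies via a Chernoff bound, from which the stated space and per-record time costs follow directly. Your explicit normalization to $\theta_s = g_s/n^2$ and the order of the variance bookkeeping differ only cosmetically from the paper's manipulation of $Pr[|G_s - g_s| > \epsilon n^2]$.
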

\begin{proof} 
\ifdefined\Archive
  See Appendix~\ref{sec:proofs}. 
\else
  See the extended version~\cite{RFarxiv2018}.
\fi
\end{proof}
%
\textit{Remarks.}
Note that $g_s$ appears neither in the time nor in the space complexity.
Although this statement discusses
the error of selectivity estimation, which is a relative error based on
$n^2$, by slightly changing the proof, it is not hard to see 
the statement also holds for relative errors defined based on $g_s$.
The algorithm constructs $d-s+1$ sketches each of size $O(\frac{1}{\epsilon^2} \log(1/\delta))$, giving a space cost of $O((d-s+1)\frac{1}{\epsilon^2} \log(1/\delta))$, meaning that
using constant time per record and constant number of counters, the algorithm can give accurate 
estimates of similarity self-join size with high probability.
It should be noted though that each counter needs $\log F$ bits to implement, where $F$ is the maximum frequency of a sub-value. In the extreme case where records all have the same value, $F$ would be $O(n)$.

The statement also shows that both time and space costs will increase
when $d$ increases or the gap between $d$ and $s$ widens. 
There is a clear tradeoff between time and space
controlled by $w$ (implicitly by $r$). 
If $w$ is large, time cost will be smaller while 
space cost will be larger. 
Compared with the offline case, the online case requires much less space and returns a final
estimate much faster after scanning the dataset once. Our experiments show that the overall error
in the online case is still negligible and much less than the competitors under typical settings of $d$ and $s$. 

\section{Asymptotic Time Compared to Random Sampling}
\label{sec:runtime}
In terms of a time comparison, there are two main stages in both SJPC and random sampling: data summarization and size calculation. 
At the data summarization stage, random-sampling takes $O(1)$ time per record, whereas SJPC has to construct $r \sum_{k=s}^d {d \choose k}$ sub-values for each record, and for each sub-value $t$ counters will be updated, where $r$ is the sampling ratio and $t$ is the sketch depth. Thus SJPC
will take $O(rt \sum_{k=s}^d {d \choose k})$ time per record. Random sampling is clearly faster at the data summarization stage.

At the size calculation stage, having the data summary, SJPC
has to compute the mean or median of $(d-s+1) w t$  counters and plug them in Equation~\ref{eq:X_k}
to find the estimates. Hence it will take $O( (d-s+1) w t)$ time,
which basically is the time for scanning the data summary once, while random
sampling takes $O(R(R-1)/2)$ time for a sample of size R, quadratic in the sample size. 
Thus, online SJPC is faster at the size calculation stage. 

The total time $T(n)$ for SJPC is $O( n r t \sum_{k=s}^d {d \choose k} + (d-s+1)wt)$, which can be written as $O(n \sum_{k=s}^d {d \choose k})$, with the sampling ratio $r$ and the sketch depth $t$ treated as constants. Also since $\sum_{k=s}^d {d \choose k} \leq d^{d-s}$, we can use $O(n d^{d-s})$ as an upper bound of the running time.

The cost of random sampling depends on the sample size, which is a function of $n$. We know from 
Lemma~\ref{lem:sampling_lower_bound} that the size of the sample must be larger than $n^{1/2}$ to obtain an estimate that has an error less than 100\%. Let the sample size be $n^\frac{p}{p+1}$ for $p>1$. Random sampling needs $T(n)=O(n^\frac{2p}{p+1})$ time to obtain an estimate. Figure~\ref{fig:time-rs-sj-varyd} shows how the two methods cope as the dataset size $n$ and the dimensionality $d$ increase. As expected, random sampling suffers when $n$ increases whereas SJPC suffers when $d$ increases or $d-s$ widens. On the other hand, SJPC scales linearly with $n$.

\begin{figure*}[ht]
 \begin{center}
   \includegraphics[width=3.0in]{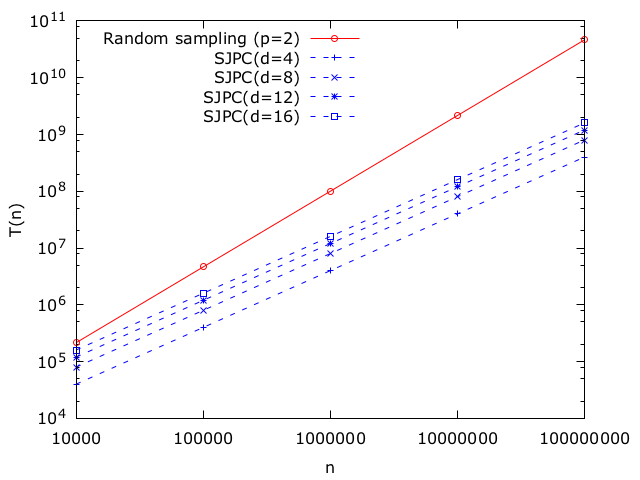} 
   \includegraphics[width=3.0in]{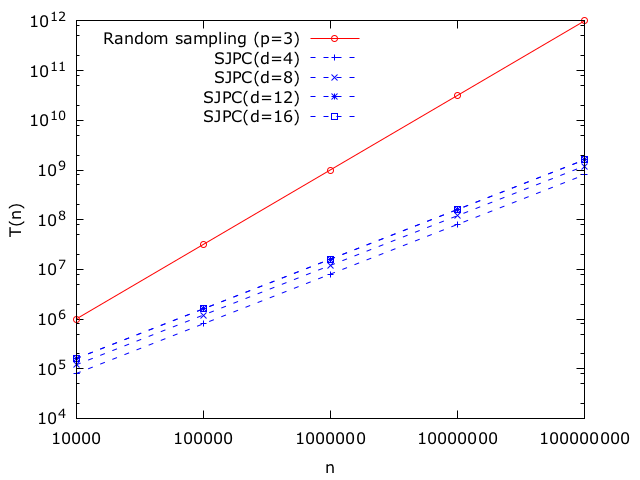}
    \caption{Asymptotic time varying $n$ and $d$, with $p=2$ and $s=d-1$ on the left and $p=3$ and $s=d-2$ on the right}
    \label{fig:time-rs-sj-varyd}
  \end{center} 
\end{figure*}

\section{Similarity Join Size Estimation}
\label{sec:ext-join}
An important problem related to similarity self-join size estimation is estimating similarity join sizes. 
First we show an estimate that does not hold for similarity join sizes.  A simple estimate for join size is based on the self-join sizes. Alon et al.~\cite{JoinSize99PODS} show that for two relations $A$ and $B$
\[
|A \bowtie B| \leq \frac{SJ(A) + SJ(B)}{2}
\]
where $SJ(A)$ and $SJ(B)$ are self-join sizes of $A$ and $B$ respectively on the joining attributes. This does not hold for similarity join though. Here is a counter example. Let $A$ consists of the row $(a,b,c,d)$ and $B$ consists of $(a,b,cx,dx)$ and $(ax,bx,c,d)$. With the similarity threshold set to $2$, rows in $A$ and $B$ join, and the join size is $2$; but the similarity self-join size of $A$ is $1$ and that of $B$ is $2$ and the bound on the join size does not hold. The same can be shown for larger thresholds. For example, with a similarity threshold set to $3$, let $A$ be the same as above and $B$ be the set of three rows $(ax,b,c,d)$, $(a,bx,c,d)$, and $(a,b,cx,d)$. Again the bound does not hold.

A well-known fact in both join and self-join size estimation is that an estimation is generally ineffective when the size to be estimated is small compared to the sizes of the relations being joined,  and a sanity check may be performed to avoid such cases~\cite{JoinSize99PODS}.
Consider the problem of similarity join size estimation between two relations $R$ and $S$ in the presence of one such sanity check. 
An estimation
algorithm may look like this: (1) project the records of $A$ and $B$ independently into sub-value streams (as discussed in Sec.~\ref{sec:our-sjpc}), (2) construct a sketch for each sub-value stream for a total of $2(d-s+1)$ sketches,  (3) estimate the join sizes between sub-value streams of $A$ and $B$ at each of the levels $s,\ldots,d$, and (4) estimate the join size based on the join sizes of the sub-value streams. As discussed for self-join sizes, the computation in Step 4 is exact meaning given exact join sizes of the sub-value streams, no error can be introduced in Step 4.
The only source of error here is (a) error due to sampling from projections in Steps 1-2. and (b) the sketch error in estimating join sizes between sub-value streams.
   
The join size estimation in Step 3 uses the product of the sketches for sub-value streams; in particular,   
given (AGMS and Fast-AGMS) sketches $S(A)$ and $S(B)$ of
two relations $A$ and $B$ respectively, an estimator for $A \bowtie B$ is $S(A).S(B)$. 
It is easy to show that this estimate is unbiased since the expected contribution of non-matching values to the product is zero when the sketch mapping functions are 2-wise (in case of AGMS) or 4-wise (in case of Fast-AGMS) independent.
Alon et al.~\cite{JoinSize99PODS}
show that this estimate has a variance which does not exceed two times the product of self-join sizes of $A$ and $B$. 

Let random variables $X_k$ and $Y_k$ denote respectively the similarity join size and the join size both at level $k$. The relationship between the two variables can be written as
\begin{equation}
\label{eq:X_kJ}
X_k =Y_k/r^2 - \sum_{j=k+1}^d {j \choose k} X_j,
\end{equation}
where $r$ is the sampling rate, set to the same value for both streams.
Note that in case of a similarity join size estimation, there is no self-pair (where a record joins itself) and this gives rise to the slight difference between this estimate and that in Equation~\ref{eq:X_k}.

The estimate is unbiased since $Y_k$ is unbiased and has the expectation
\begin{equation}
\label{eq:E-X_kJ}
E(X_k) = E(Y_k)/r^2 - \sum_{j=k+1}^d {j \choose k} E(X_j)
\end{equation}
and the variance
\begin{align*}
Var(X_k) &= Var(Y_k)/r^4 + \sum_{j=k+1}^d {j \choose k}^2 Var(X_j) \\
 & -\sum_{j=k+1}^d (2/r^2) {j \choose k} Cov(Y_k, X_j)
\end{align*}
which can be bounded as
\begin{equation}
\label{eq:var-X_kJbound}
Var(X_k) \leq Var(Y_k)/r^4 + \sum_{j=k+1}^d {j \choose k}^2 Var(X_j).
\end{equation}

\section{Experiments}\label{sub:exp}
To verify our analytical findings in more practical settings, and to assess both the robustness and the performance of the SJPC algorithm, we conducted a set of experiments on both real and synthetic data under different settings including different similarity thresholds, dataset sizes, and dimensionalities. 
%
When applicable, the performance of our method is compared to that of the competitors including the LSH-based bucketing and random sampling (see Sec.~\ref{sec:baseline-algs} for details of these algorithms).

\subsection{Experimental Setup}
The following three datasets were used in our evaluation (see also Section~\ref{sec:runtimeExp} for larger datasets and experiments).\\
\noindent
{\bf DBLP5.} This was a set of records selected from DBLP~\footnote{http://dblp.uni-trier.de/xml}. The selection criteria was that a record was selected if it had non-empty values in (all of) the following 5 fields: {\it title, author, journal, volume} and {\it year}. In the first 20,000 records that were qualified, there were 19884 unique titles, 15917 unique authors, 29 unique journals, 125 unique volumes and 49 unique years.

\noindent
{\bf DBLP6.} This was similar to DBLP5 except every record here had non-empty values in the following 6 fields: {\it title, author, journal, month, year} and {\it volume}. The dataset had 2468 records. There were 2456 unique titles, 1601 unique authors, 9 unique journals, 150 unique volumes, 41 unique years and 26 unique months. 

\noindent
{\bf DBLPtitles.} This was a set of paper titles from DBLP with each title fingerprinted into 6 super-shingles where each super-shingle was a 64 bit fingerprint. This resembled the experimental setting of Henzinger~\cite{nearDuplicate06Henzinger} and Broder et al.~\cite{BroderGMZ97ComputNet}, where their goal was to find near-duplicate Web pages. This resulted in 467,468 records, each with 6 attributes. The number of unique values in each column ranged from 27000 to 30000.

\begin{table}
\caption{Accumulative count of $s$-similar pairs, excluding self-pairs. The count shown at row $s=i$ includes all pairs that have a $s$-similarity $i$ or larger.} 
\label{table:dataStat}
\begin{tabular}{r|rrr}
 & DBLP5 & DBLP6 & DBLPtitles\\ \cline{2-4}
\backslashbox{s}{n} & 20,000 & 2468 & 200,000\\ \hline
6 & na               & 0               & 19,356\\
5 & 70               & 26             & 210,666\\
4 & 761             & 7,984        & 1,900,702\\
3 & 1,827,680   & 29,405      & 16607104\\
2 & 2,112,300   & 184,287    & 103,992,978\\
1 & 39,556,445 & 1,655,537 & 521,423,328\\
\end{tabular}
\end{table}


Table~\ref{table:dataStat} gives more stats on these datasets including similarity join sizes for different similarity thresholds.
Unless stated otherwise, all experiments are repeated 30 times and either the mean, the standard deviation or both of the relative error is reported.

\subsection{Offline scenario}
In the first set of our experiments, we wanted (1) to evaluate our method without introducing any error due to the sketching and (2) to characterize its performance compared to other baselines. This was possible in the offline scenario (as discussed in Section~\ref{sec:analysis}), under which the baseline algorithms introduced in Section~\ref{sec:baseline-algs} could as well be applied.
Next we compare the performance of our method to some of these non-streaming solutions under the same or similar space requirements.

\textbf{A note on the signature pattern counting of Lee et al.~\cite{LNS09}}.
We think there is a mistake in the formulation presented in the paper. In particular, with the formulation of $C_{l,t}$ in the authors' Equation~4, the estimates of similarity self-join size can be negative. This is what we observed in our experiments of running this algorithm on DBLP5 and DBLP6. Also the estimates were sometimes off by a factor of 2 or larger. We carefully verified our implementation and it was indeed consistent with the paper. We also noticed that Equation~4 applied to the authors' own example of LC(2) on Page~6 would give $-2$ instead of the reported result $6$. After communicating this with the authors and given the fact that the same authors show LSH-SS outperforms the signature pattern counting, we decided not to report our results for the latter. 

\textbf{Relative error on DBLP5, DBLP6 and DBLPtitles}.
In this experiment, we compare the performance of our method to LSH-based bucketing of Lee et al.~\cite{LNS11}; the selected algorithm for LSH-based bucketing is referred to as LSH-SS by the authors, which is shown to perform the best in their experiments. The sampling ratio for SJPC was set to 0.5 and $m_H$ and $m_L$ for LSH-SS was set to $n$, the size of the dataset, as suggested by the authors.

Figures~\ref{fig:err2metdblp6} shows both the mean and the standard deviation (std) of the relative error over 30 runs on DBLP6. In terms of both the mean and the standard deviation of the error, SJPC outperforms LSH-SS and has a standard deviation of the error which is sometimes an order of magnitude smaller than that of LSH-SS.  
The dataset had no 6-similar pairs, and both algorithms detected that correctly. 
Similar results are observed on DBLP titles as shown in Figures~\ref{fig:err2metdblpti}. In another experiment, we evaluated LSH-SS under two different sampling strategies. In the first strategy, referred to LSH-SSv1, the sampling ratio was set as suggested by the authors, i.e. $m_H=m_L=n$, and the sample size grew linearly with $n$. In our second strategy, we set the sampling ratio to a constant (set to $0.005$ in our experiments), meaning each pair was sampled with a fixed probability and the sample size grew linearly with the number of pairs. The results on DBLP5 are shown in Figure~\ref{fig:err2metdblp5}.
 
\begin{figure*}[tb]
\centering
\begin{minipage}{.3\textwidth}
\hspace*{-0.5cm}
  \centering\includegraphics[width=2.35in]{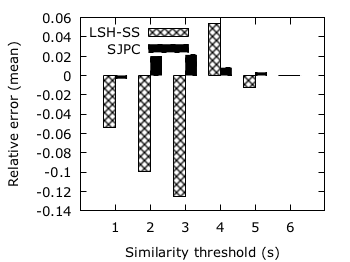} 
\end{minipage}
\hspace*{0.3cm}
\begin{minipage}{.3\textwidth}
\hspace*{-0.5cm}
  \centering\includegraphics[width=2.35in]{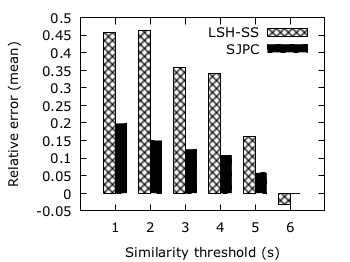} 
\end{minipage}
\hspace*{0.3cm}
\begin{minipage}{.3\textwidth}
\hspace*{-0.5cm}
  \centering\includegraphics[width=2.35in]{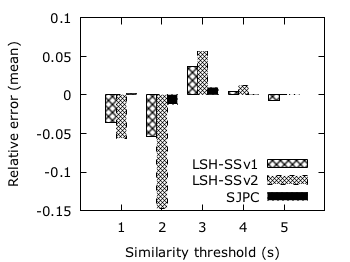} 
\end{minipage}
\begin{minipage}{.3\textwidth}
\hspace*{-0.5cm}
  \centering\includegraphics[width=2.35in]{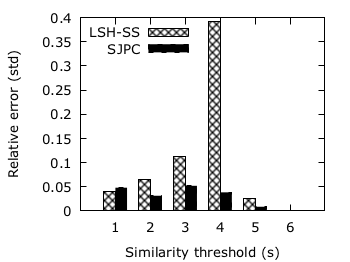}
  \caption{Relative error comparison on DBLP6 (offline case)}
  \label{fig:err2metdblp6}
\end{minipage}
\hspace*{0.3cm}
\begin{minipage}{.3\textwidth}
\hspace*{-0.5cm}
  \centering\includegraphics[width=2.35in]{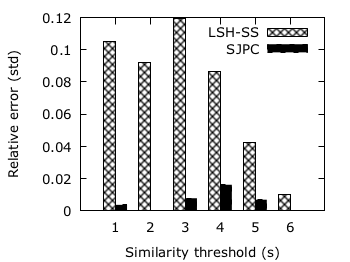}
  \caption{Relative error comparison on DBLPtitles (offline case)}
  \label{fig:err2metdblpti}
\end{minipage}
\hspace*{0.3cm}
\begin{minipage}{.3\textwidth}
\hspace*{-0.5cm}
  \centering\includegraphics[width=2.35in]{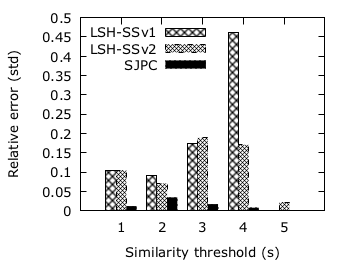}
  \caption{Relative error comparison on DBLP5 (offline case)}
  \label{fig:err2metdblp5}
\end{minipage}
\end{figure*}

\textbf{Materializing sub-value streams}.
Limiting our algorithm into the offline case (i.e. without the space and time-cost optimization due to the sketching) allowed us to compare its performance to multi-pass algorithms that assume the dataset and/or the intermediate data structures can be materialized. This is not usually feasible in a streaming environment and is not the right setting for our algorithm. That said, the offline case can be executed if the intermediate sub-value streams can be materialized. This is what we did in an 
implementation of both SJPC-offline and LSH-SS, where the memory usage for each method was tracked at various points during the execution (e.g. when a variable is defined or loaded) by calling a task manager function, before and after, and the largest difference for each method was reported. We verified the accuracy of this method by loading datasets of known sizes and comparing the space usage reported using this method with the actual size, and the method was accurate in the range of Kilobytes especially if the experiment was repeated.
As shown in Figure~\ref{fig:mem2metdblp5}, the space needed for materializing sub-value streams, to our surprise, was not much more than that of LSH-SS especially for large similarity thresholds (which is usually the case in similarity estimations), and this makes SJPC-offline a viable option due to its better error bounds.
%
%
\begin{figure}[ht]
 \begin{center}
   \includegraphics[width=2.5in]{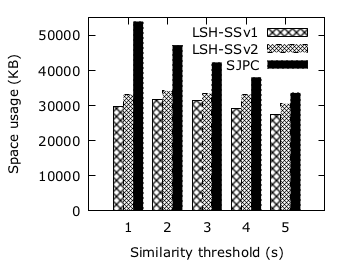} 
    \caption{Materialization cost on DBLP5}
    \label{fig:mem2metdblp5}
  \end{center} 
\end{figure}

\subsection{Online scenario}
\label{sec:exp-online}

In an online scenario where only one pass can be made over the data, random sampling is the only competitor. In this section, we compare the accuracy of SJPC to random sampling.

\textbf{Comparison to random sampling}.
Similar to the offline scenario, 
we set the sampling ratio to $0.5$, and ran our online SJPC 
on the first $200K$ rows of DBLPtitles. 
The sketch width (number of counters) was set to $1000$, 
and the sketch depth was set to $3$. SJPC needs one sketch for every sub-value stream, and the number of
sub-value streams is $d-s+1$ where $d$ is the data dimensionally and $s$ is the minimum similarity threshold that is desired. One can cover all useful similarity ranges (e.g. $s=3,\ldots,6$) by creating $4$ sketches on this particular dataset; this translates to 12,000 counters, each implemented as a 32-bit integer, giving a total space of 48,000 bytes. The same amount of space was allocated to random sampling. Every record of DBLPtitles had 6 fields, and each field was a 64-bit fingerprint, adding up to $48$ bytes per record. That meant, random sampling were given space for $1000$ records.


Both random sampling and SJPC give unbiased  estimates, hence we compare their standard deviations of the estimates.
As shown in Figure~\ref{fig:err2metdblptitles}, SJPC outperforms random sampling by a large margin. 
The standard deviation of the estimates for random sampling is almost an order of magnitude higher.
Also it should be noted that this was under the setting that the sketches are maintained for all similarity thresholds $s=3,\ldots,6$. For example,
with larger values of $s$, the space usage of SJPC is reduced (in terms of the number of counters to the number of records) while the accuracy remains the same;  this cannot be done in random sampling without affecting its accuracy.
\begin{figure}[ht]
 \begin{center}
   \includegraphics[width=2.5in]{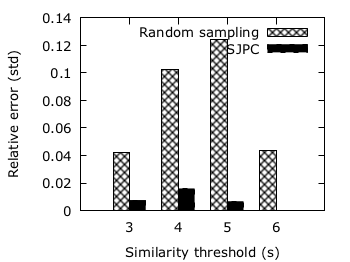} 
    \caption{Relative error on DBLPtitles (online case)}
    \label{fig:err2metdblptitles}
  \end{center} 
\end{figure}

Records in both DBLP5 and DBLP6 were longer and a bigger difference in performance between SJPC and Random sampling was expected. In particular, the same setting of our sketching could be used on DBLP5 and DBLP6 since a 32-bit counter was enough to keep the counts. However, random sampling suffered because of the space limitations. The average length of a record in DBLP5 was 167 characters and in DBLP6 was 121 characters. Under ASCII encoding where each character takes one byte, random sampling would have enough space to store 287 records of DBLP5 and 397 records of DBLP6. For the same reason, the results are not reported on these two datasets.

\subsection{Varying the parameters of SJPC}
In this section, we evaluate the performance of the SJPC algorithm under different parameter settings.

\textbf{Varying the sampling ratio}.
In all our previous experiments, the sampling ratio was set to $0.5$ meaning only half of the sub-values were sampled. The sampling ratio only affects the per-record processing time and not the space usage, hence if the processing time is not a constraint, the sampling ratio should be $1$ to obtain a better estimate. To study the relationship between the sampling ratio and the accuracy, we varied the sampling ratio from $0.25$ to $1$ while keeping everything else the same as before, i.e. 200K records of DBLPtitles with the sketch width and depth set at 1000 and 3 respectively. Figure~\ref{fig:err-dblptitles-varyf-d-s} (left) shows the effect of the sampling ratio on the standard deviation of the error. The error consistently drops as the sampling rate increases with an exception at the similarity threshold 1 where a sampling ratio of 0.5 performs slightly better than the next sampling ratio. We don't have a good explanation here other than confirming that this is due to the interaction between sketching and sampling and that the sketch in this particular case performed better on the sample. A similar behaviour is observed by Rusu and Dorba~\cite{RusuD09} in some of their experiments on constructing sketches over samples.
The mean error also drops (not shown here) but the drop is not as significant as the drop in the standard deviation. 
An observation that can be made is that the sampling ratio can vary between sub-value streams, for example, to reduce the error at certain values of $k$. It is easy to incorporate this in our formulation in Equation~\ref{eq:X_k}.

\begin{figure*}[tb]
\centering
\begin{minipage}{.3\textwidth}
\hspace*{-0.5cm}
\includegraphics[width=2.35in]{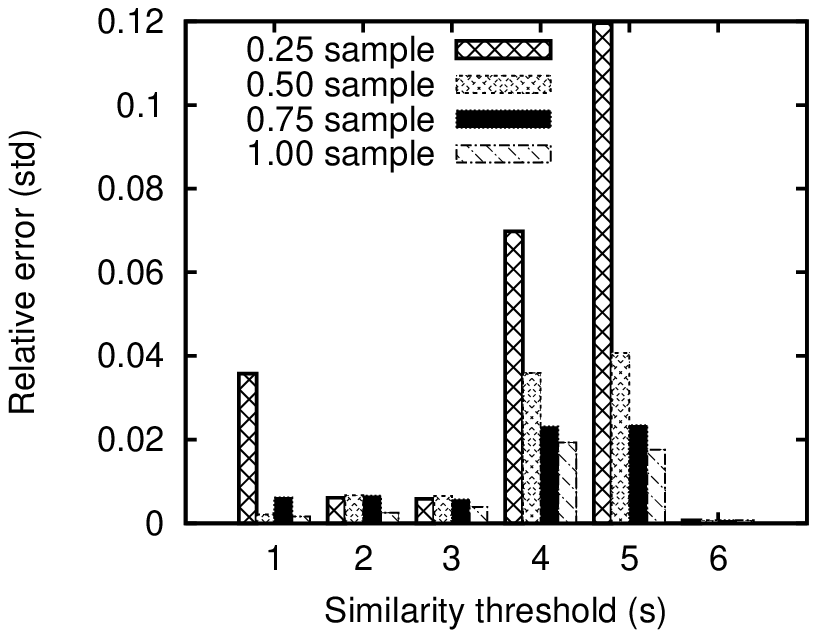}
\end{minipage}
\hspace*{0.25cm}
\begin{minipage}{.3\textwidth}
\hspace*{-0.5cm}
\includegraphics[width=2.35in]{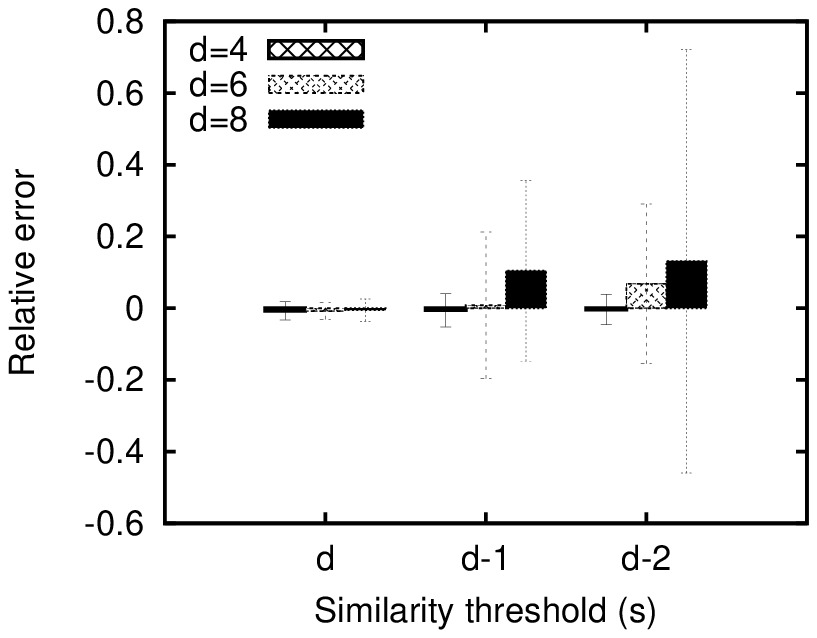}
\end{minipage}
\hspace*{0.25cm}
\begin{minipage}{.3\textwidth}
\hspace*{-0.5cm}
\includegraphics[width=2.35in]{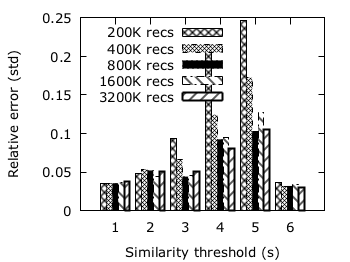}
\end{minipage}
\caption{Relative error std on DBLPtitles varying the sampling ratio (left), the number of columns (middle), and dataset size (right)}
\label{fig:err-dblptitles-varyf-d-s}
\end{figure*}


\textbf{Varying the dimensionality}.
In another experiment, we used the first 100K records of DBLPtitles but varied the number of columns by generating different number of super-shingles for each text. Other parameters were kept the same (i.e. sampling ratio of $0.5$ and sketch width at $1000$ and sketch depth at $3$). 
As shown in Figure~\ref{fig:err-dblptitles-varyf-d-s} (middle), the standard deviation of the error takes a hit when the dimensionality increases, which is consistent with our analytical prediction. This is under the condition that all other parameters are kept the same. Clearly one can reduce the error by increasing the sketch width/depth and/or the sampling ratio, as shown in our previous experiments.


\textbf{Varying dataset size and the number of duplicates}.
In this set of experiments, we started with 400K records of DBLPtitles and duplicated each record $X$ times with $X$ taking the values $1$, $2$, $4$ and $8$. This gave us datasets of sizes 400K, 800K, 1600K and 3200K. This particular setting allowed us to easily compute the true sizes without doing a join on larger files. We also included the first 200K records of DBLPtitles to see if the trend is the same when records are not duplicates.
As before, the sampling ratio of our method was set to $0.5$, and sketch
width and depth to $1000$ and $3$ respectively. 
Figure~\ref{fig:err-dblptitles-varyf-d-s} (right) shows that SJPC does not suffer when the dataset size increases while keeping the space usage and the sampling ratio the same; in fact for some similarity thresholds (e.g. 3, 4 and 5), the error drops as the dataset size increases. Compare this to random sampling where the sample size should increase at least as a square root function of the input size to maintain the same error rate. 
In contrast, having a larger number
of records can even be helpful for the sampling part of our algorithm, as hinted in our analytical results and also revealed in Figure~\ref{fig:err-dblptitles-varyf-d-s}.



\subsection{Running time} 
\label{sec:runtimeExp}
We conducted experiments to evaluate the running time of SJPC and its scalability with the dataset 
size, compared to random sampling.
The evaluation was conducted on larger datasets (more precisely, orders of magnitude larger than those used in earlier sections), as discussed next. One dataset was real, and three dataset were generated synthetically with varying degrees of skew to show how the skew may affect the scalability. Each records in both real and synthetic data consisted of 
5 columns~\footnote{A synthetic data record with 5 columns may represent, for example, papers with fields such as {\it first author, second author} (if any), {\it title, year,} and {\it venue}; a 4-similar pair in this case can be two copies of the same paper with one mismatched field. }. 

\noindent
{\bf Near-uniform 40-60.} This is a set of randomly generated 5-fields records with each field formed by concatenating two long integers (making a 64 bits field). 40\% of the records are unique, and each of the remaining 60\% have one $4$-similar pair. 

\noindent
{\bf Skewed 20-80.} This is a set of randomly generated records with each field formed by concatenating two long integers. 20\% of the records are unique and each of the remaining 80\% have 15 $4$-similar pairs. If each set of similar records is treated as an entity, then 20\% of the entities make up 80\% of the records.
 
 \noindent
{\bf Skewed 10-90.} Similar to Skewed 20-80, 10\% of the entities (each described by a set of similar records) make up 90\% of the records.

 \noindent
 {\bf YFCC.}  This is a set of 21 million records from Flickr 100 million photo dataset~\cite{thomee2016yfcc100m}. Each record in our case includes the following 5 fields: userid, date taken, the capturing device, the latitude and the longitude.

The experiments were conducted on a machine with AMD quad-core 2.3 GHz CPU and 16GB ram running Ubuntu.
Both algorithms SJPC and random sampling were implemented in {\em C} programming language and were compiled using {\em gcc}. 

For SJPC,  the space usage was fixed for all datasets, with the sketch depth and width respectively set to 1000 and 3 as before and the sampling ratio $r=1$.
For random sampling, we varied the sample size until random sampling could catch up SJPC in terms of the absolute value of the relative error. That did not happen until the sample size passed $n^{0.97}$, $n^{0.9995}$ and $n^{0.9995}$ respectively for Near-uniform 40-60, Skewed 20-80 and Skewed 10-90. At those sample sizes, SJPC was always faster in our experiments with any dataset larger than one million records we tried. 
Figure~\ref{fig:runtime-err-vsize} (left) shows the mean running time of SJPC over 10 runs, on Skewed 20-80 and YFCC, as the dataset size is varied. First, for each method, the running time on YFCC closely matched that of Skewed 20-80, and as a result they are not distinguishable in the figure.
Second, as expected, the running time of SJPC grows linearly with the input, whereas the running time for random sampling~\footnote{The sample size was set to $n^{998}$ to have an error not that far from that of SJPC.} increases quadratically with the input.
Each run of random sampling on 8 million records was taking more than 4 days and we could not run it for larger datasets. 
The absolute value of the relative error for both methods are also shown in Figure~\ref{fig:runtime-err-vsize} (right). In terms of the space usage, random sampling requires at least an order of magnitude more space than SJPC, and the space usage of random sampling must increase with data skewness for its error rate to keep up with that of SJPC.


\begin{figure*}[htb]
\centering
 \begin{tabular}{cc}
    \includegraphics[width=3.1in]{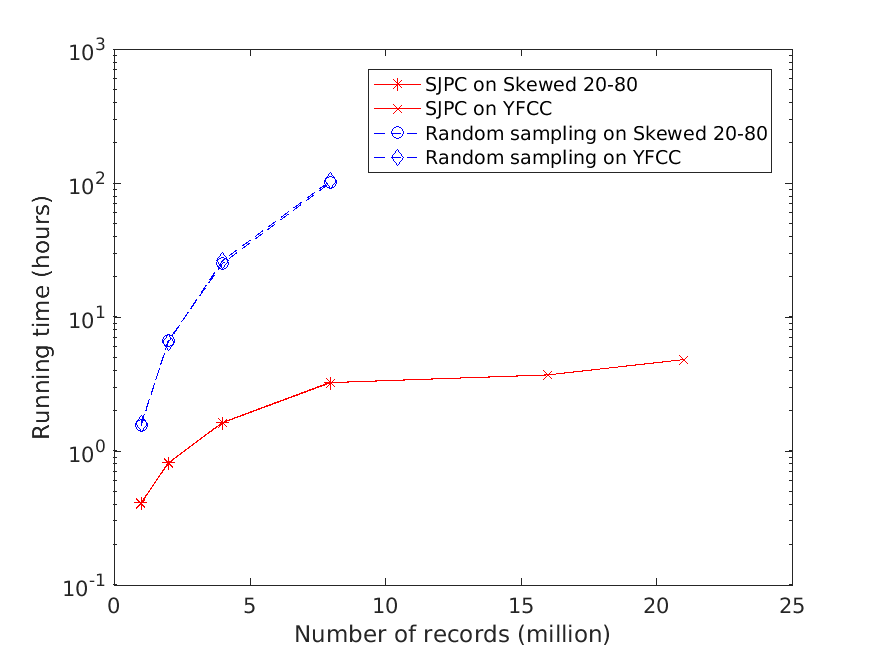} &
    \includegraphics[width=3.1in]{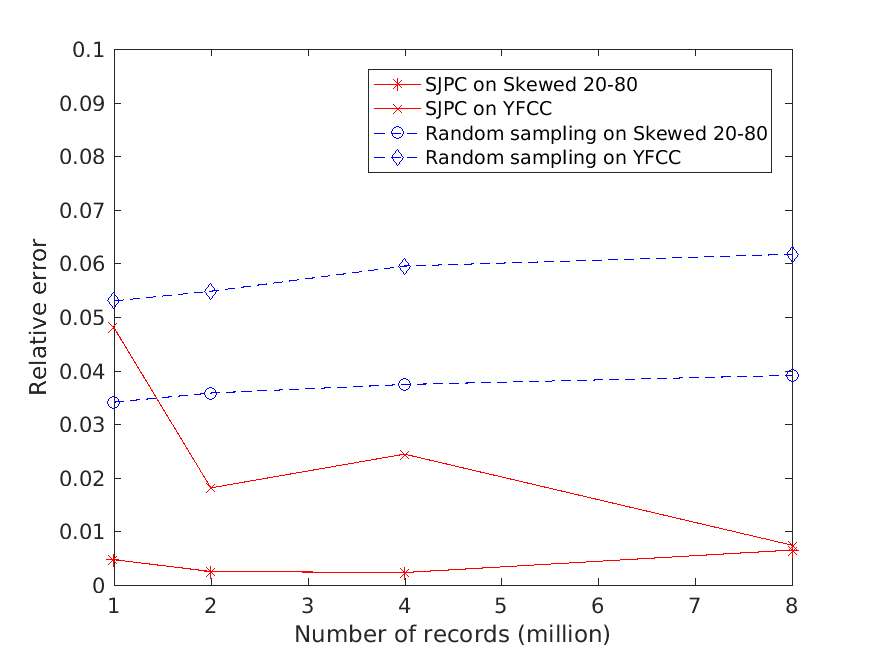}
  \end{tabular}
  \caption{Running time (left) and relative error (right) on Skewed 20-80 and YFCC}
  \label{fig:runtime-err-vsize} 
\end{figure*}

%


%


\subsection{Discussions}
The objective of our experimental evaluation was stated as verifying our analytical findings in more practical settings, and assessing both the robustness and the performance of the SJPC algorithm, as compared to competitors (when applicable).
We evaluated our method on four real datasets, including DBLP5, DBLP6, DBLPtitles and YFCC and some synthetic data including Near-uniform and Skewed, showing that our algorithm outperforms the state-of-the-art methods from the literature  (i.e. LSH-based bucketing in the offline case and random sampling in the online case) in terms of the accuracy of the estimates, the space usage and running time. We experimented with different parameter settings of SJPC, showing that the algorithm is robust and the performance can be managed with different parameters. Our evaluation also confirmed that the SJPC algorithm scales linearly with the input, making it suitable in settings where only one pass over data is feasible. A limitation of the SJPC algorithm is that it does not scale so well to large data dimensionality and this is the price paid for the linear scaling with $n$, for example compared to random sampling. 
  

\section{Related Work}\label{sec:RelatedWork}
Our work relates to the areas of {\it efficient similarity join}, {\it selectivity estimation}, and {\it sketching techniques} .

\noindent
\textbf{Efficient similarity join.}
The problem of similarity join (of tuples) under hamming distance can be mapped to a set similarity join where each tuple becomes a set, for which many algorithms have been developed (e.g., \cite{BayardoMS07,Xiao2011}).
A general and often efficient algorithm to evaluate set similarity join is index nested loop join, where the inner index returns a set of candidates and the outer loop filters those candidates before performing a pairwise comparison to produce the result. For example, all algorithms recently evaluated by Mann et al.~\cite{mann2016empirical} follow this framework and vary in their filtering and candidate generation steps. The time complexity of all these algorithms is quadratic in the input size. To reduce the cost, parallel set similarity is studied using MapReduce~\cite{vernica2010efficient} and with data represented as arrays~\cite{zhao2016similarity}.

Similarity join is also studied in the context of $d$-dimensional points with $d\in [2,32]$. A common approach is to associate points to cubes or cells and only join points with overlapping cells~\cite{koudas2000high,jacox2007spatial}. EGO-based approaches use a combination of sort and divide operations to identify sets of points that cannot join~\cite{kalashnikov2013super}. These algorithms are quadratic for typical values of dimensions and similarity thresholds.

\noindent
\textbf{Selectivity estimation.}
Selectivity estimation has been an important component of query optimization,
and accurate estimates often provide huge savings in cost. Although the problem is widely studied for relational operators with exact predicates (e.g., range predicates~\cite{poosala96}, substring queries~\cite{ChenKKM03jcss}, spatio-temporal queries~\cite{chois02}, joins~\cite{GetoorTK01sigmod}) and despite its importance for similarity predicates (e.g. \cite{Silva09}),
there has not been much study on estimating the selectivity of similarity predicates.
Tata and Patel~\cite{TataP07a} study the problem in the context of Cosine predicates, discussing some of the difficulties.
Hadjieleftheriou et al.~\cite{HYKS08} study the selectivity estimation for set similarity queries
and show that more concise samples can be constructed from the inverted lists of tokens and also report on the performance of different sampling strategies.
Lee et al.~\cite{LNS09,LNS11} study the same problem as ours, and Heise et al.~\cite{Heise14} use random sampling to estimate the sizes of clusters formed by fuzzy duplicates. We compare our work to both that of Lee et al. and random sampling, when applicable or appropriate.

\noindent
\textbf{Sketching techniques.}  
As our work uses sketching to estimate the size of a sub-value stream, there are quite some works on sketching techniques that are applicable. For example, instead of Fast-AGMS~\cite{CountSketch04TCS,SketchNet05VLDB}, which is used in our experiments, Bloom filters may be extended to answer frequency related queries including join and self-join size estimation~\cite{CountMin05JAlgorithms,cmm}.
Rusu and Dobra~\cite{RusuD08} review and evaluate some of these sketches
for join size estimation.
The same authors also study the problem of sketching over samples and show that a speed up in factors of 10 is achievable without much decrease in accuracy~\cite{RusuD09}.
Our sampling is slightly different in that we are sampling from the space of projections of each record.

\noindent
\textbf{Others.}
Deng et al.~\cite{fanDiversity12} study the problem of diversity analysis where similar randomized techniques are used to estimate an average pair-wise similarity. String similarity join~\cite{jiang2014string} may also be mapped to set similarity (for token-based) or hamming similarity (for character-based), where join size estimation techniques will be useful.
Our work may also be applicable in data cleaning and record deduplication settings (see \cite{ElmagarmidIV07tkde} and Christen~\cite{christen2012survey} for extensive surveys). 
\section{Conclusions and Future Directions}\label{sec:conclude}
In this paper, we studied the problem of similarity self-join size estimation and presented 
a solution for efficiently finding an estimate within one pass over data. We analyzed the accuracy, time and space usage of our algorithm and experimentally evaluated it on both real and synthetic datasets. Our evaluation showed that the proposed algorithm has a relatively high accuracy (often an order of magnitude better than the competitors) and low time and space cost.
 
Our algorithm scales linearly with the input, and even larger input sizes can help with the accuracy, but it does not scale so well with the dimensionality, which is the price paid for the linear scale up with $n$. 
Our method is readily applicable in cases where $d$, the dimensionality of the data (or the number of columns), is low, or the similarity threshold $s$ is high so that ${d \choose s}$ does not explode to avoid {\it the curse of dimensionality}. 
On the other hand, when the input has a large number of columns, it is often the case that a subset of the columns are selected in queries or analyzed (this has been the premise in some of the work on
{\it projected clustering}~\cite{AggarwalHWY05} and detecting unique column combinations~\cite{AbedjanQN14}). 


More studies are needed to understand the behaviour of our algorithm, applied to high dimensional data, and the conditions under which more accurate estimates can be obtained. 
In particular, one area is studying some of the conditions under which our work can be extended to higher dimensions. For example, one may decompose a table into smaller attribute groupings, and compute the similarity self-join size under each grouping before merging the results. Finding decompositions under which the similarity self-join size can be accurately estimated from that of the decomposed table is an interesting future direction. Another area is studying the problem and the proposed solution under some simplifying assumptions (e.g. on the data distribution) that allows tighter bounds to be obtained and/or a better understanding of the problem is gained. One more interesting question is if (the data structure or the estimate of) a similarity join size estimation can be part of a similarity join algorithm, possibly to speed up the join. 


\section*{Acknowledgments}
This research is supported by the Natural Sciences and Engineering
Research Council of Canada.

\bibliographystyle{IEEEtran}
\bibliography{ref}

\ifdefined\Archive
\else
\begin{wrapfigure}{L}{0.10\textwidth}
\centering
\includegraphics[width=0.10\textwidth]{davood.jpg}
\end{wrapfigure}
\noindent  {\bf Davood Rafiei} 
did his undergrad work at Sharif, his M.Sc. in Waterloo and his PhD in Toronto before joining the University of Alberta, where he is now Associate Professor of Computer Science and member of the Database Systems Research Group. 
His areas of interest span over databases and the Web and include 
integrating natural language text with relational data, Web information 
retrieval and similarity queries and indexing. Davood has spent time, as a visiting scientist, 
at Google (Mountain View), Kyoto University and the University of Paris Descartes. \\

\begin{wrapfigure}{L}{0.10\textwidth}
\centering
\includegraphics[width=0.10\textwidth]{fan.jpeg}
\end{wrapfigure}
\noindent {\bf Fan Deng} did both his undergrad and his M.Sc. at Huazhong University of Science and Technology and his PhD at the University of Alberta. He was a postdoc fellow at L3S before moving to industry. His areas of interest include database systems, information retrieval and social networks.
\fi

\ifdefined\Archive

\newpage
\appendix
\section{Proofs}
\label{sec:proofs}
\noindent
\textit{Lemma~\ref{lem:sampling_lower_bound}}
Random-sampling requires a sample of size $\Omega(\sqrt{n})$ to give an 
estimate of the similarity self-join size with a relative error less
than $100\%$ with high probability. 
\begin{proof}
This is an adaptation of the proof of Lemma 2.3 in \cite{JoinSize99PODS}.
Let $s$ be an arbitrary similarity threshold.
Construct two datasets $D_s$ and $D_{ns}$, each with $n$ records such that no record in $D_{ns}$ is $k$-similar to any other record for all values of $k$, but $D_s$ has $n/2$ $s$-similar pairs of records and there is no other form of similarity between the records. 
A sampling-based estimate of the similarity self-join size for $D_{ns}$ will be $n$, and that for $D_s$ will be also $n$ using samples of size $o(\sqrt{n})$ with high probability. This is simply because the 
chance that a similar pair (not including self-pairs) makes to the sample is $(n/2)/(n(n-1)/2)$ and the expected number of such pairs in a sample of size $o(\sqrt{n})$ is $O(n^{3/2}/n^2)$, which is close to zero for large $n$.  
However, the similarity self-join sizes for $D_{ns}$ and $D_s$ are $n$ and $2n$ respectively, and the estimate is off by a factor of $2$ with high probability.
As another instance, suppose $D_s$ has $\sqrt{n}$ records that are identical on $s$ columns and $D_{ns}$ is as defined before. The s-similarity self-join sizes of $D_{ns}$ is $n$ and that of $D_s$ is $2n$. However,
the chance that one of those s-similar pairs is included in a sample of size $\sqrt{n}$ is $\sqrt{n}/n$, 
and the chance that $k$ of them are included in a sample of size $\sqrt{n}$ is
\[
\frac{\sqrt{n}(\sqrt{n}-1)\ldots(\sqrt{n}-k+1)}{n(n-1)\ldots (n-k+1)}.
\]
This probability is very close to zero for large values of $n$ or $k$.
That means random sampling will report with a high probability an s-similarity self-join size of $n$ for both $D_s$ and $D_{ns}$. 
\end{proof}

\noindent
\textit{Theorem \ref{thm:E_VAR_offline}. }
The SJPC algorithm gives an unbiased estimate of the 
$s$-similarity self-join size under the offline scenario, i.e. $E[G_s] = g_s$, and
the standard deviation of $\frac{G_s} {g_s}$ is at most 
$${d \choose s}\sqrt{\frac{1}{r} {2(d-s) \choose {d-s}} / g_s}, $$
where $G_s$ is the estimate and $g_s$ is the true value.
\begin{proof}
To find the variance of $G_s$, we need the variance of 
$X_k$ ($k=s \dots d$). Eq.~\ref{eq:X_k} gives a recursive
expression of $X_k$, as a function of $X_{k+1} \dots X_d$ and $Y_k$.
First, we show that $X_k$ can be represented as a function 
of $Y_k \dots Y_d$ with the recursion removed. Second, we prove the unbiased
property of $X_k$. Last, we derive an upper bound of the variance of $Y_k$, and this allows us
to bound the variance of $G_s$.
The details are as follows.

First, we prove by induction that
\begin{equation} \label{eq:X_k_Y} 
X_k = \frac{1}{r^2} \sum_{j=k}^d (-1)^{j-k} {j \choose k} Y_j + C_k,
\end{equation}
where $k \in [1,d]$, and $C_k$ is a constant hence not important in 
the expression of the variance.
From Eq.~\ref{eq:X_k}, we can easily verify that Eq.~\ref{eq:X_k_Y} holds 
for $k=d$ and $d-1$. Assuming Eq.~\ref{eq:X_k_Y} holds for an arbitrary $k \in [2, d]$,
we want to prove that it holds for $k-1$ as well.
From Eq.~\ref{eq:X_k} we have 
\[
X_{k-1} = (Y_{k-1} - r {d \choose k-1} n) / r^2 - \sum_{j=k}^d {j \choose k-1} X_j
\]
Using the induction hypothesis to replace $X_j$, we have
\begin{align*}
X_{k-1}=& \frac{1}{r^2} Y_{k-1} - \frac{n}{r} {d \choose k-1} \\ 
 &  - \sum_{j=k}^d {j \choose k-1} 
   (\frac{1}{r^2} \sum_{i=j}^d (-1)^{i-j} {i \choose j} Y_i + C_j).
\end{align*}
If we change the indexes to the filled part in Figure~\ref{fig:indexes}(a) and denote the constants with
$C_{k-1}$, the right side becomes
\begin{align*}
\frac{1}{r^2} Y_{k-1} - \frac{1}{r^2} \sum_{i=k}^d \sum_{j=k}^i
   (-1)^{i-j} {j \choose k-1} {i \choose j} Y_i + C_{k-1}.
\end{align*}
It is easy to verify that ${j \choose k-1}{i \choose j}={i \choose k-1}{i-k+1 \choose j-k+1}$. 
Also\\ $\sum_{j=k}^i (-1)^{i-j} {i-k+1 \choose j-k+1} = (-1)^{i-k}$ for $i=k,\ldots,d$ (see Lemma~\ref{lemma:binomialT2} in the Appendix), hence
\begin{align*}
X_{k-1}=& \frac{1}{r^2} Y_{k-1} - \frac{1}{r^2} \sum_{i=k}^d {i \choose k-1} Y_i
   (-1)^{i-k} + C_{k-1}\\
=& \frac{1}{r^2} \sum_{i=k-1}^d (-1)^{i-k+1} {i \choose k-1}Y_i  + C_{k-1}.
\end{align*}
Eq.~\ref{eq:X_k_Y} holds for $k-1$, thus
it holds for all $k \in [1,d]$. Now the similarity self-join size, $G_s$, can be rewritten as follows (with the replacement of indexes to the filled part of Figure~\ref{fig:indexes}(b) in the last step):
\begin{align*} 
G_s = \sum_{k=s}^d X_k = \frac{1}{r^2} \sum_{k=s}^d 
      \sum_{j=k}^d (-1)^{j-k} {j \choose k} Y_j + \sum_{k=s}^d C_k + n
\end{align*}
\begin{equation}
G_s =  \frac{1}{r^2} \sum_{j=s}^d \sum_{k=s}^j (-1)^{j-k} {j \choose k} Y_j  + \sum_{k=s}^d C_k + n.
\label{eq:Gs}
\end{equation}
%
%
\begin{figure}[ht]
\centering
\includegraphics[width=1.8in]{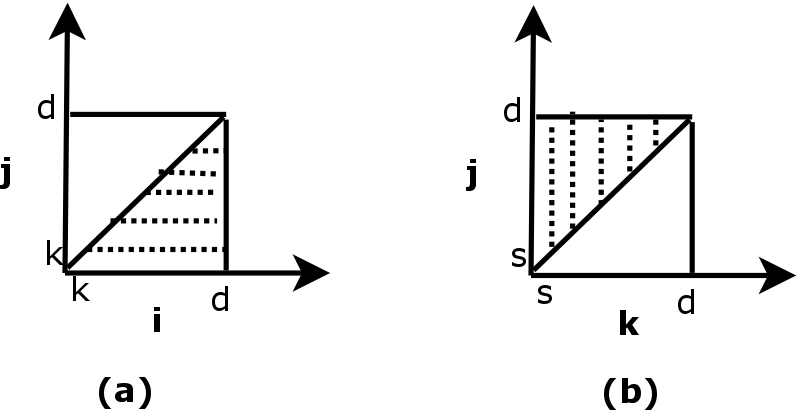}
\caption{Index substitutions}
\label{fig:indexes}
\end{figure}

Second, we show that SelfJoinPairCount gives an unbiased estimate.
Let $O_j$ be the set of all $j$-similar record pairs excluding self pairs (i.e. when a record joins itself),
and $Z_{o_j, k}$ be the value that a $j$-similar record pair, denoted by 
$o_j$, 
contributes to  $Y_k$ (the self-join size of the $k$-sub-value stream);
then we have
\begin{equation}
Y_k = \sum_{j=k}^d \sum_{\forall o_j \in O_j} Z_{o_j, k} + n r {d \choose k}.
\end{equation}
Note that $X_k$, $Y_k$ and $Z_{o_j, k}$ are all random variables. 
The expected value of $Z_{o_j, k}$ in the sample is
$$\mu_{j,k} = E[Z_{o_j, k}] = 
r^2 {j \choose k}. 
$$
Therefore
\begin{equation} 
E[Y_k] = r^2 \sum_{j=k}^d {j \choose k} x_j + n r {d \choose k}.
\label{eq:EYk}
\end{equation}
$\mu_{j,k}$ is the expected value that a $j$-similar pair contributes
to $Y_k$ and $x_j$ is the true number of $j$-similar record pairs. From Eq.~\ref{eq:X_k} we can see that SelfJoinPairCount removes the 
contributions of \{$k+1, k+2, \dots, d$\}-similar pairs from $Y_k$, thus
it is not hard to verify that $X_k$ is an unbiased estimate for $x_k$.

Last, we derive an upper bound of the variance of $Y_k$.
Let $l_k$ denote the expected number of times a record will appear in a sample of level
$k$, i.e. $l_k= r {d \choose k}$, and
$p_{j, k, i}$ be the probability that a $j$-similar record pair contributes
$i$ to  $Y_k$, then the variance of $Z_{o_j, k}$ is
\begin{align*}
\sigma_{j,k}^2 = & \mbox{VAR}[Z_{o_j, k}] = E[Z_{o_j, k}^2] - \mu_{j,k}^2 = 
            \sum_{i=1}^{l_k} i^2 p_{j, k, i} - \mu_{j,k}^2 \\
      \leq &  l_k \sum_{i=1}^{l_k} i p_{j, k, i} - \mu_{j,k}^2 
	    = l_k \mu_{j,k} - \mu_{j,k}^2 \\
	 = & r^3 {j \choose k} {d \choose k}- r^4 {j \choose k}^2 
	    \leq r^3 {j \choose k} {d \choose k}.
\end{align*}
The variance of $Y_k$ can be written as
\[
\mbox{VAR}[Y_k] = \sum_{j=k}^d x_j \sigma_{j,k}^2  + 2\sum_{\substack{j1=k,\ldots,d\\ j2=k,\ldots,d\\ o,o'\in O \ \&\ o \neq o'}} Cov(Z_{o_{j1},k}, Z_{o'_{j2},k}).
\]
%
%
For two pairs $o$ and $o^\prime$, they may or may not have a row in common. When the two pairs have no row in common, the covariance term will be zero. Now suppose there is a row $r$ that is common between the two pairs. Let's denote the pairs as $(r, r_1)$ and $(r, r_2)$.
Since the projections of $r_1$ and $r_2$ are independently chosen uniformly at random, the covariance term is zero even though the projections of $r$ is the same for both pairs. Hence the covariance term can be ignored, and we have
$$
\mbox{VAR}[Y_k] = \sum_{j=k}^d x_j \sigma_{j,k}^2 
           \leq r^3 {d \choose k} \sum_{j=k}^d {j \choose k} x_j. 
$$
Using Equation~\ref{eq:Gs}, we can bound $\mbox{VAR}[G_s]$ from above to
\begin{align*} 
      \frac{1}{r^4} \sum_{k=s}^d {k \choose s}^2 \mbox{VAR}[Y_k] 
      \leq & \frac{1}{r} \sum_{k=s}^d  {k \choose s}^2 
	      \sum_{j=k}^d x_j {j \choose k}{d \choose k} 
\end{align*}
\begin{align*}
         = & \frac{1}{r} \sum_{k=s}^d \sum_{j=k}^d x_j {j \choose s}
	     {j-s \choose k-s}{d \choose s} {d-s \choose k-s} 
\end{align*}
\begin{align*}
	 \leq & \frac{1}{r} {d \choose s}^2 \sum_{k=s}^d {d-s \choose k-s}^2
	        \sum_{j=k}^d x_j
		\leq \frac{1}{r} {d \choose s}^2 \sum_{j=s}^d x_j 
		\sum_{k=s}^d {d-s \choose k-s}^2 \\
	 = &\frac{1}{r} {d \choose s}^2 {2(d-s) \choose d-s} g_s.
\end{align*}
Thus 
\[
\mbox{VAR}(G_s/g_s) =\frac{1}{r} {d \choose s}^2 {2(d-s) \choose d-s} / g_s.
\]
\end{proof}

\begin{lemma}
\label{lemma:binomialT2}
For $i \geq k$, 
\[
\sum_{j=k}^i (-1)^{i-j} {i-k+1 \choose j-k+1} = (-1)^{i-k}.
\]
\end{lemma}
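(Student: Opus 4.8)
The plan is to reduce the identity to a single application of the binomial theorem after a change of summation variable. First I would set $m = i-k+1$ (the upper entry of the binomial coefficient, which satisfies $m \ge 1$ precisely because $i \ge k$) and reindex the sum with $l = j-k+1$. Under this substitution the lower entry $j-k+1$ becomes $l$, the exponent $i-j$ becomes $m-l$, and the summation range $j=k,\ldots,i$ becomes $l=1,\ldots,m$. The left-hand side is thus rewritten as $\sum_{l=1}^{m} (-1)^{m-l} {m \choose l}$, while the target value $(-1)^{i-k}$ becomes $(-1)^{m-1}$.

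Next I would recognize this as an alternating binomial sum that is missing only its $l=0$ term. The binomial theorem gives $\sum_{l=0}^{m} (-1)^{m-l} {m \choose l} = (1+(-1))^{m} = 0$ for every $m \ge 1$. Since the omitted $l=0$ term equals ${m \choose 0}(-1)^{m} = (-1)^{m}$, subtracting it yields $\sum_{l=1}^{m}(-1)^{m-l}{m \choose l} = -(-1)^{m} = (-1)^{m-1}$, which is exactly the claimed value $(-1)^{i-k}$ once we substitute back $m-1 = i-k$.

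There is essentially no obstacle here; the only point requiring care is the boundary behaviour of $(1-1)^m$. The hypothesis $i \ge k$ guarantees $m \ge 1$, so that $(1-1)^m = 0$ rather than the value $1$ it would take at $m=0$; the degenerate case $i=k$ (where $m=1$) is handled uniformly by the same computation, the sum then reducing to the single term ${1 \choose 1}(-1)^0 = 1 = (-1)^{i-k}$. An induction on $i$ via Pascal's rule would also establish the identity, but it is strictly more cumbersome, so I would favor the binomial-theorem route.
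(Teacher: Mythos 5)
Your proof is correct and follows essentially the same route as the paper's: both reindex the sum via $l=j-k+1$ (the paper calls it $m$), complete it with the missing $l=0$ term, and kill the completed sum with the binomial theorem applied to $(1-1)^{i-k+1}$. Your explicit remark that $i\ge k$ forces the exponent to be at least $1$, so the alternating sum genuinely vanishes, is a small point of care the paper leaves implicit.
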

\begin{proof}
If we replace the variables $j-k+1$ with $m$ and $i-k+1$ with $n$, the left side becomes
\begin{align*}
 & \sum_{m=1}^n (-1)^{n-m} {n \choose m} \\
 = & -(-1)^n +  \sum_{m=0}^n (-1)^{n-m} {n \choose m}\\
 = & (-1)^{n-1} + (-1)^n  \sum_{m=0}^n (-1)^m {n \choose m}.
\end{align*}
With the Binomial theorem~\cite{binothm} applied to the summation, the second term becomes zero. The proof is complete after replacing $n-1$ in the first term with $i-k$.
\end{proof}
\noindent
\textit{Theorem}~\ref{thm:online-SJPC}
(Unbiased estimate and variance)
The SJPC algorithm gives an unbiased estimate of the
$s$-similarity self-join size in an online scenario, i.e. $E[G_s] = g_s$, and the variance
of $\frac{G_s} {g_s}$ is at most
$$
{d \choose s}^2 \frac{1}{r}  {2(d-s) \choose d-s} 
((1+\frac{2}{w})/ g_s + \frac{2}{w} (1 + \frac{n}{r g_s})^2),
$$
where $w$ is the Fast-AGMS sketch width (depth is $1$), 
$d$ is the number of attributes, $s$ is the given similarity 
threshold, $r$ is the sampling ratio, $g_s$ is the true value
of the similarity self-join size, and $G_s$ is the estimated value.
\begin{proof}
Since both offline version of SJPC and Fast-AGMS provides unbiased estimates,
it is not hard to see the estimates from the online case are also unbiased.
 
Let $Y_k^\prime$ denote the self-join size estimate of 
the sub-value stream 
using the Fast-AGMS algorithm, and $Y_k$ denote  
the self-join size estimate using the offline version of our algorithm
as before. According to the law of 	total variance~\cite{weiss05}, 
\begin{align*}
\mbox{VAR}[Y_k^\prime] =& E[\mbox{VAR}[Y_k^\prime | Y_k]] + \mbox{VAR}[E[Y_k^\prime | Y_k]] \\
            \leq&  E[\frac{2}{w} Y_k^2] + \mbox{VAR}[Y_k] \\
		=& (1+\frac{2}{w}) \mbox{VAR}[Y_k] + \frac{2}{w} E[Y_k]^2.
\end{align*}
Similar to the proof of Theorem~\ref{thm:E_VAR_offline}, we have
\begin{align*} 
         & \mbox{VAR}[G_s] = \mbox{VAR}[\sum_{k=s}^d X_k] 
 	 \leq \frac{1}{r^4} \sum_{k=s}^d {k \choose s}^2 \mbox{VAR}[Y^\prime_k] \\
   \leq & \frac{1}{r^4} (1+\frac{2}{w}) \sum_{k=s}^d {k \choose s}^2 \mbox{VAR}[Y_k]
         + \frac{1}{r^4} \frac{2}{w}  \sum_{k=s}^d {k \choose s}^2 E[Y_k]^2 
\end{align*}
With 
$\frac{1}{r^4} \sum_{k=s}^d {k \choose s}^2 \mbox{VAR}[Y_k] \leq \frac{1}{r} {d \choose s}^2 {2(d-s) \choose d-s}g_s$
as shown above and replacing $E[Y_k]$ from Eq.~\ref{eq:EYk}, we have
\begin{align*}
  \mbox{VAR}[G_s] \leq & (1+\frac{2}{w}) \frac{1}{r} {d \choose s}^2 {2(d-s) \choose d-s}g_s\\
        &  + \frac{1}{r^4} \frac{2}{w} \sum_{k=s}^d {k \choose s}^2 
	  (r^2  \sum_{j=k}^d {j \choose k} x_j + n r {d \choose k})^2 \\
   \leq & (1+\frac{2}{w}) \frac{1}{r} {d \choose s}^2 {2(d-s) \choose d-s}g_s\\
        &   + \frac{1}{r^4} \frac{2}{w} \sum_{k=s}^d {k \choose s}^2 
	   {d \choose k}^2 (r^2 g_s + nr)^2 \\
      = & (1+\frac{2}{w}) \frac{1}{r} {d \choose s}^2 {2(d-s) \choose d-s}g_s\\
        & + \frac{1}{r^2 } \frac{2}{w} \sum_{k=s}^d {d \choose s}^2 
	  {d-s \choose k-s}^2 (r g_s + n)^2 \\
      = & (1+\frac{2}{w}) \frac{1}{r} {d \choose s}^2 {2(d-s) \choose d-s}g_s\\
        &  + \frac{1}{r^2 } \frac{2}{w} {d \choose s}^2 {2(d-s) \choose d-s}
	  (r g_s + n)^2 .
\end{align*}
Therefore, the claim on the variance of $\frac{G_s} {g_s}$ holds.
\end{proof}


\noindent
\textit{Theorem}~\ref{thm:selectivity_online}
(Space and time cost to bound the selectivity estimation error)
The SJPC algorithm guarantees that the
estimated selectivity of the similarity self-join deviates from the true value
by at most $\epsilon$ with probability at least $1-\lambda$. More precisely,
$Pr[|\hat{\theta}_s - \theta_s| \leq \epsilon] \ge 1-\lambda$, where
$\hat{\theta}_s$ is the estimated selectivity and $\theta_s$ is the true
value. The space cost is $O(\log(1/\lambda) (d-s+1) w)$,
and the time cost for processing each record is
$O(\log(1/\lambda) {d \choose s}^2 {2(d-s) \choose d-s} 
(\sum_{k=s}^d {d \choose k}) / (\epsilon^2 w)) $.
\begin{proof} 
We have
\begin{align*}
& Pr[|G_s - g_s| > \epsilon n^2]  \leq  \frac{\mbox{VAR}[G_s]}{\epsilon^2 n^4} \\
\leq  & \frac{1}{\epsilon^2 n^2 r} {d \choose s}^2 {2(d-s) \choose d-s}
        ((1+\frac{2}{w}) + \frac{2}{w} (\frac{g_s}{n} + \frac{1}{r} )^2 )\\
\leq  & \frac{1}{\epsilon^2 r} {d \choose s}^2 {2(d-s) \choose d-s}
        (\frac{1}{n^2} (1+ \frac{2}{w}) + \frac{2}{w} (1+\frac{1}{nr})^2).
\end{align*}
Let the above term be $\frac{1}{8}$,  then we have
\begin{align*}
r \ge & \frac{8}{\epsilon^2} {d \choose s}^2 {2(d-s) \choose d-s}
   (\frac{2}{n^2} (1+ \frac{2}{w}) + \frac{8}{w}) \\
\approx & \frac{64}{\epsilon^2 w} {d \choose s}^2 {2(d-s) \choose d-s},
\end{align*}
assuming $n r \ge 1$. 
To increase the success probability, 
we can repeat the same algorithm independently
$2 \log(1/ \lambda)$ times, 
and take the median of the multiple results. Due to Chernoff bound, 
we can guarantee the probability that SJPC fails is at most $\lambda$.
Since SJPC picks $O(r \sum_{k=s}^d {d \choose k})$ 
sub-values for each record,
the time and space costs stated in the theorem can give the desired result. 
\end{proof}

\fi  

\end{document}